\documentclass[11pt]{article}
\usepackage{fullpage}
\usepackage{amsfonts}
\usepackage{algorithmic}
\usepackage{algorithm}
\usepackage{amsmath,amsthm,amssymb}
\usepackage{latexsym}
\usepackage{epic}
\usepackage{epsfig}
\usepackage{amscd}
\usepackage{url}
\usepackage{verbatim}

\newtheorem{theorem}{Theorem}[section]
\newtheorem{corollary}[theorem]{Corollary}
\newtheorem{lemma}[theorem]{Lemma}
\newtheorem{proposition}[theorem]{Proposition}

\newtheorem{definition}[theorem]{Definition}

\newtheorem{observation}[theorem]{Observation}

\newenvironment{proofsketch}{\begin{proof}[Proof Sketch]}{\end{proof}}

\newcommand{\cross}{\times}

\newcommand{\set}[1]{\left\{ #1 \right\}}
\newcommand{\union}{\bigcup}
\newcommand{\intersect}{\bigcap}
\newcommand{\sm}{\setminus}

\renewcommand{\hat}{\widehat}



\def\max{\qopname\relax n{max}}

\def\argmax{\qopname\relax n{argmax}}


\def\Ex{\qopname\relax n{\mathbf{E}}}

\newcommand{\RR}{\mathbb{R}}
\newcommand{\RRp}{\RR^+}

\newcommand{\ZZ}{\mathbb{Z}}

\def\A{\mathcal{A}}

\def\V{\mathcal{V}}

\def\sse{\subseteq}

\newcommand{\eat}[1]{}

\newcommand{\figeps}[2][1.0]{
  \begin{center}
    \includegraphics[scale=#1]{#2}
  \end{center}}



\newcommand{\mini}[1]{\mbox{minimize} & {#1} &\\}
\newcommand{\maxi}[1]{\mbox{maximize} & {#1 } & \\}
\newcommand{\st}{\mbox{subject to} }
\newcommand{\con}[1]{&#1 & \\}
\newcommand{\qcon}[2]{&#1, & \mbox{for } #2.  \\}
\newenvironment{lp}{\begin{equation}  \begin{array}{lll}}{\end{array}\end{equation}}
\newenvironment{lp*}{\begin{equation*}  \begin{array}{lll}}{\end{array}\end{equation*}}

\newcommand{\ie}{{\it i.e.}}

\begin{document}

\title{Truthful Assignment without Money}



\author{
Shaddin Dughmi\thanks{Supported by a Yahoo! Research internship, a Siebel Foundation Scholarship, and NSF grant CCF-0448664.} \\
Stanford University\\
{\tt shaddin@cs.stanford.edu}
\and
Arpita Ghosh\\
Yahoo! Research\\
{\tt arpita@yahoo-inc.com}
}

\maketitle

\newcommand{\gap}{\textrm{GAP}}
\newcommand{\vigap}{\textrm{VIGAP}}
\newcommand{\sigap}{\textrm{SIGAP}}
\newcommand{\mkp}{\textrm{MKP}}
\newcommand{\kp}{\textrm{KP}}
\newcommand{\mwbm}{\textrm{MWBM}}
\newcommand{\mbm}{\textrm{MBM}}

\thispagestyle{empty}
\addtocounter{page}{-1}
\begin{abstract}
We study the design of truthful mechanisms that do not use payments for the generalized assignment problem (GAP) and its variants. An instance of the GAP consists of a bipartite graph with jobs on one side and machines on the other. Machines have capacities and edges have values and sizes; the goal is to construct a welfare maximizing feasible assignment. In our model of private valuations, motivated by impossibility results, the value and sizes on all job-machine pairs are public information; however, whether an edge {\em exists} or not in the bipartite graph is a job's private information. That is, the selfish agents in our model are the jobs, and their private information is their edge set. We want to design mechanisms that are truthful without money (henceforth \emph{strategyproof}), and produce assignments whose welfare is a good approximation to the optimal omniscient welfare.

We study several variants of the GAP starting with matching. For the unweighted version, we give an optimal strategyproof mechanism. For maximum weight bipartite matching, we show that no strategyproof mechanism, deterministic or randomized, can be optimal, and present a 2-approximate strategyproof mechanism along with a matching lowerbound. Next we study knapsack-like problems, which, unlike matching, are NP-hard. For these problems, we develop a general LP-based technique that extends the ideas of Lavi and Swamy \cite{laviswamy} to reduce designing a truthful approximate mechanism without money to designing such a mechanism for the {\em fractional} version of the problem.
We design strategyproof  approximate mechanisms for the fractional relaxations of multiple knapsack, size-invariant GAP, and value-invariant GAP, and use this technique to obtain, respectively, $2$, $4$ and $4$-approximate strategyproof mechanisms for these problems. We then design an $O(\log n)$-approximate strategyproof mechanism for the GAP by reducing, with logarithmic loss in the approximation, to our solution for the value-invariant GAP. Our technique may be of independent interest for designing truthful mechanisms without money for other LP-based problems.
\end{abstract}
\newpage

\section{Introduction}\label{sec:intro}
The design of truthful mechanisms, where selfish utility maximizing agents have no incentive to lie about their true preferences, has been studied in innumerable settings. The vast majority of these mechanisms, however, assume the existence of money--- a carefully designed payment scheme incentivizes agents to report their preferences truthfully.
However, there are settings where monetary transfers are not feasible, either because of ethical or legal issues \eat{or simply
  inappropriateness} \cite{AGT}, or because of practical issues with enforcing and collecting payments \cite{PT_nomoney}. This observation has led to a growing literature on designing mechanisms that incentivize agents to report their true preferences {\em without} using payments: for convenience, we refer to such mechanisms as {\em strategyproof}. 

In this paper, we focus on the design of strategyproof mechanisms for {\em assignment} problems. An instance of an assignment problem consists of a bipartite graph with items, or jobs, on one side, and bins, or machines, on the other; associated with each bin is a capacity, and with each edge a value and a size. A feasible assignment is a (partial) mapping from items to bins, where no bin's capacity is exceeded by the sizes of the items assigned to it. The goal is to compute a feasible assignment that maximizes \emph{welfare}: the sum of the values of the jobs for machines they are assigned to.  

The most general version of this problem, where both the size and value of a job can differ for different machines, is referred to as the generalized assignment problem (GAP). A number of well-known algorithmic assignment problems are special cases of the GAP. For instance, the problem with just one bin is the knapsack problem, and the problem with unit-sized items and bins is the maximum weight bipartite matching problem. Assignment problems are ubiquitous, and have been extensively studied due to their vast applicability, both from an algorithmic and mechanism design perspective. However, studying these problems in a setting {\em without money}, as we do, adds additional difficulties, since properties like monotonicity and weak monotonicity (see \cite{Myerson81,AT01,LMN03,AGT}) no longer suffice for truthfulness. Moreover, the popular VCG mechanism, as well as the maximal in range paradigm, require payments for truthfulness.

There are a number of settings where part of the input to an assignment problem is held by selfish agents, and the problem must be solved without the use of money. Unfortunately, as we discuss in \S \,\ref{sec:limits}, not much can be done if jobs hold their real-number values for the various machines private; this is consistent with many impossibility results from social choice theory. In this paper, we therefore focus on a restricted, yet natural, setting that admits interesting results--- for each pair $(i,j)$, it is public knowledge that the value of job $i$ for machine $j$ is either $v_{ij}$ or $0$, but job $i$ holds private which of those is the case for each $j$. This models situations where the private data encodes a {\em compatibility relation} between jobs and machines; the public $v_{ij}$ values arise in situations where the value derived from an assignment materializes over a public channel (for instance via a verifiable financial transaction). As a result, a job cannot hide its true value for any machine it anticipates being assigned to, although it can misreport a non-zero value as $0$, a lie that will not be discovered since the job will not be allocated to this machine. We will see that, if the mechanism is not chosen carefully, such strategic manipulations can be beneficial to a selfish agent even in very simple instances of the GAP. We also note that this is a \emph{multi-parameter} problem, where each job holds one bit of private information for each of the bins.

There are several natural settings that correspond to our model of private values. Suppose, for instance, a group of people are to split up a collection of tasks. Each task requires different skills, and how well each person can perform a task is public knowledge. Each task also has different time and location constraints, and whether or not the constraint is feasible for a person is only known privately to her. This problem is an instance of weighted bipartite matching; while finding an optimal solution is computationally easy, we are interested in algorithms that also ensure strategyproofness. In another example, consider a resource allocation problem, such as scheduling jobs on a collection of non-identical machines. The value of running a job on a machine, as well as the time it takes, is public knowledge.  However, each job requires specific  hardware and software that is available on only some of the  machines. Moreover, only the owner of the job knows which machines are compatible with the job. Here, the algorithm used to assign jobs to machines must ensure that the jobs do not have an incentive to lie about which machines are compatible.

As in \cite{PT_nomoney}, we are interested in strategyproof mechanisms that achieve a good {\em approximation} to the welfare of the optimal omniscient solution. The need for approximations arises for two reasons in our work: first, the GAP is NP-hard, as are several of its special cases; \ie, approximation is necessitated by computational intractability. The second reason is much more interesting--- unlike in settings with payments, solving the allocation problem optimally does not necessarily lead to a truthful mechanism, and we need to sacrifice approximation in order to obtain truthfulness.  We will see both factors playing a role as we seek strategyproof mechanisms that are good approximations to the various special cases of GAP. For example, we will see that no strategy proof mechanism for maximum-weight matching can be optimal, and we must sacrifice an approximation factor of $2$ for truthfulness. In contrast, a (non-polynomial time) algorithm that returns an optimal solution to the multiple knapsack problem while breaking ties consistently is strategyproof; however, since we are interested in polynomial-time strategyproof algorithms, we must resort to an approximately optimal mechanism that is both truthful and polynomial-time implementable.

Why would an agent benefit from lying in assignment problems when there are no payments? Consider, for instance, the weighted bipartite matching problem (\S \ref{sec:mwm}). Consider the following instance: job $a_1$ has edges with weights $1+\epsilon$ and $1$ to machines $b_1$ and $b_2$, and job $a_2$ has an edge to $b_1$ with weight $1$. An algorithm that simply chooses the maximum weight matching according to the reports incentivizes job $a_1$ to simply claim that the second edge does not exist: in the first case, the assignment chosen is $(a_1, b_2), (a_2, b_1)$, whereas in the second case, the assignment chosen is $(a_1, b_1)$, which suits $a_1$ better, with value $1+\epsilon$. This example makes it clear that that the optimal (and obvious) algorithms are not necessarily truthful--- not surprisingly, a carefully designed algorithm is essential to ensure that no agent has an incentive to lie, exactly as in mechanism design with money.

\subsection{Our Results}

We study the design of approximation mechanisms that are truthful without money for several variants of the GAP.  We begin in \S \,\ref{sec:combinatorial} with {\em matching}, which can be solved optimally in polynomial time from a  purely computational perspective.  We show that for the maximum matching problem, where all edge values are equal, simply returning the optimal solution while breaking ties consistently leads to a strategyproof mechanism. However, when a job's value depends on the machine, as in weighted bipartite matching, no deterministic strategyproof mechanism can achieve an approximation better than $2$; we provide such a mechanism.

Next, we examine knapsack-like variants of the GAP. Instead of specially tailored combinatorial algorithms for each variant, we extend the techniques in \cite{laviswamy} to reduce designing a truthful mechanism without money to designing such a mechanism for the {\em fractional} version of the problem: if the strategyproof mechanism for the fractional version yields an $\alpha$ approximation to the optimal fractional solution, and the corresponding LP has integrality gap $\beta$,  we derive a strategyproof randomized mechanism for the original problem with approximation ratio $\alpha\cdot \beta$. This technique applies to a large class of packing problems, and may of independent interest.

The GAP has integrality gap $2$, so a fractional strategyproof mechanism with approximation ratio $\alpha$ yields a $2\alpha$ strategyproof (in expectation)  mechanism for each of the knapsack-like variants of the GAP that we study. In \S \,\ref{sec:mkp}, we show, using network flows, that solving the fractional version of the multiple knapsack problem (MKP) optimally, while breaking ties consistently independent of the reported edges, gives an optimal strategyproof (fractional) mechanism. For size-invariant GAP (SIGAP), there is no optimal truthful (fractional) mechanism without money--- in \S \ref{sec:sigap}, we design a strategyproof, $2$-approximate greedy algorithm for fractional SIGAP. Using our extension of \cite{laviswamy} gives, respectively, $2$ and $4$-approximate strategyproof mechanisms for MKP and SIGAP. In \S \ref{sec:vigap_gap}, we sketch the construction of a $4$-approximate strategyproof mechanism for value-invariant GAP (VIGAP), as well as a $O(\log n)$-approximate strategyproof mechanism for the GAP. 

We point out that without the polynomial time restriction, there exist optimal strategyproof mechanisms for all variants of GAP where a node has the same value for each of its neighbors. That is, for maximum matching, MKP and VIGAP, simply solving the problem optimally, while breaking ties consistently independent of the private values (edges), leads to a truthful-without-money mechanism. For these problems, it is only computational intractability which causes us to lose an approximation factor. This is in contrast to the variants where a node has different values for different edges such as maximum weight matching and its generalizations. There, as we show in Theorem \,\ref{thm:mwm_lb}, strategyproofness and optimality cannot be achieved simultaneously.

\subsection{Related Work}Assignment problems have been studied extensively in the algorithms literature. Shmoys and Tardos \cite{ST} presented a 2-approximation for a minimization version of the \gap, and Chekuri and Khanna \cite{CK} observed that a 2-approximation to the maximization version -- the version considered in this paper -- is implicit in \cite{ST}. Moreover, it was shown in \cite{CK} that the multiple knapsack problem -- a special case of the \gap\ -- admits a PTAS\footnote{However, we note that this PTAS is not applicable to the generalization of \mkp\ that we consider, where assignments are constrained by a bipartite graph over items and bins. It follows from \cite[Theorem 3.2]{CK} that this is APX-hard.}, yet most generalizations of \mkp\ -- including the \gap -- are APX hard. Fleischer et al \cite{fleischer} obtained a $\frac{e}{e-1}$ approximation for the \gap, and showed that this is optimal for a slight generalization of the \gap. However, Feige and Vondrak \cite{FV06} then showed that the \gap\  admits a constant approximation slightly better than $\frac{e}{e-1}$, and this is the best currently known.

A number of results for the mechanism design version of assignment problems are known, although these are all in settings with money. In all of these results, the items hold their values private, and the rest of the instance is public. A $2$-approximate truthful-in-expectation mechanism follows immediately from the framework of Lavi and Swamy \cite{laviswamy}. Moreover, Briest et al \cite{BKV} devised a truthful FPTAS for the knapsack problem, as well as a truthful PTAS for \vigap\  when the number of bins is fixed. Recently, Azar and Gamzu \cite{AG} obtained a truthful 11-approximate mechanism for a variant of MKP, and Chekuri and Gamzu  obtained a $2+\epsilon$ approximation for a variant of VIGAP. We note that all the above mechanisms use money, and moreover the mechanisms in \cite{BKV,AG,CG} consider an incomparable setting to ours: our model is multi-parameter whereas theirs is single-parameter, but we consider a binary private value for each item and bin as opposed to an arbitrary real number.   

Mechanisms without money have a rich history in the social choice literature; for a survey, see \cite{AGT}. Interest in approximate mechanisms without money has been sparked by the recent work of Procaccia and Tennenholtz \cite{PT_nomoney}, which introduces the idea of using approximation to enable truthfulness in settings where solving optimally does not admit truthfulness without money. Approximate mechanisms without money have been developed for facility location \cite{PT_nomoney,AFPT_facility}, and selecting influential nodes in a social graph \cite{sumofus}.  In very recent work, Ashlagi et al.~\cite{Itai} study strategyproof mechanisms for matching motivated by kidney exchange. While we also study (bipartite) matching as a special case of the GAP, their model is very different from ours: each agent owns a set of vertices in a graph, and reports the existence of vertices to maximize the number of her vertices matched by the mechanism, plus the number that she can match amongst her hidden vertices and the vertices unmatched by the mechanism. Also related is the work of \cite{budish}, which studies a very general combinatorial assignment problem without money, and designs a mechanism which sacrifices efficiency, as well as weakens the notion of incentive compatibility, to achieve fairness.

\section{Model}
We describe the optimization version of the Generalized Assignment Problem (\gap), as well as its various special cases that we consider, in \S \ref{sec:gap}. We then review truthfulness in \S \ref{sec:truthfulness}, and discuss the limitations of truthfulness without money for the \gap\ in \S \ref{sec:limits}.  These limitations motivate our model of private valuations, which we then introduce in \S \ref{sec:privategraphmodel}.

\subsection{The Generalized Assignment Problem}\label{sec:gap}
In the \gap, there are $n$ jobs and $m$ machines. We  denote the set of jobs by $[n]=\set{1,\ldots,n}$, and the set of machines by $[m]=\set{1, \ldots, m}$. Machine $j$ has capacity $c_j \in \RR^+$. For each job $i$ and machine $j$, we associate a value $v_{ij} \in \RR^+$ and a size $s_{ij} \in \RR^+$.  An \emph{assignment} is a function $x : [n] \to [m] \union \set{*}$ partially mapping jobs to machines, where $*$ indicates that a job is left unassigned. We use $x(i)$ to denote the machine (or $*$) that job $i$ is assigned to. Moreover, the binary variable $x_{ij}$ indicates whether $x(i)=j$. A feasible assignment may allocate to machine $j$ a set of jobs of total size at most $c_j$.  The \gap\ can be written as an integer program with decision variables $\set{x_{ij}}_{ij}$; the LP relaxation obtained by relaxing the constraint $x_{ij} \in \{0,1\}$ is given below.

\begin{lp}
\tag{\gap\ LP}
\maxi{\sum_{i,j} v_{ij} x_{ij}}
\st \qcon{\sum_{j=1}^m x_{ij} \leq 1}{i=1,\ldots,n}
\qcon{\sum_{i=1}^n s_{ij} x_{ij} \leq c_j}{j=1,\ldots,m}
\con{0 \leq x_{ij} \leq 1}
\end{lp}

The above LP is known to have an integrality gap of $2$, and the rounding can be done in polynomial time \cite{ST,CK}. 

As detailed in \S \ref{sec:privategraphmodel}, we consider a setting where the private data is a bipartite graph specifying job-machine compatibility. That is, the private data are not the values $v_{ij}$ or the sizes $s_{ij}$, but rather the {\em existence} of the edge $(i,j)$. Note that this does not change the \gap\ from an algorithmic point of view, since one can encode the compatibility relation in the values $\set{v_{ij}}_{ij}$.
We define \gap[E] for a bipartite graph $E \sse [n] \cross [m]$
as the problem of computing the welfare-maximizing assignment using
only edges in $E$. The LP relaxation of \gap[E] is given below.

\begin{lp}
\tag{\gap[E] LP}
\maxi{\sum_{i,j} v_{ij} x_{ij}}
\st
\qcon{\sum_{j=1}^m x_{ij} \leq 1}{i=1,\ldots,n}
\qcon{\sum_{i=1}^n s_{ij} x_{ij} \leq c_j}{j=1,\ldots,m}
\con{0 \leq x_{ij} \leq 1}
\qcon{x_{ij} = 0}{(i,j) \notin E}
\end{lp}

We distinguish several variants of the \gap\ and their respective bipartite-graph versions. The \emph{Size-Invariant Generalized Assignment Problem} (henceforth \sigap) is the problem where the size of a job $i$ does not depend on the machine -- we denote this size by $s_i$. Similarly, the \emph{Value-Invariant Generalized Assignment Problem} (henceforth \vigap) is the problem where the value of a job $i$ does not depend on the machine -- we denote the value by $v_i$. The \emph{Multiple Knapsack Problem} (henceforth \mkp) is the problem where neither size nor value depend on the machine.  The \emph{knapsack problem} (henceforth \kp) is MKP with $m=1$.

In addition to knapsack-type problems like those dicussed above, the \gap\ also generalizes bipartite matching problems. The  \emph{maximum weight bipartite matching problem} (henceforth \mwbm) is the problem where all capacities and sizes are $1$. The \emph{maximum bipartite matching problem} (henceforth \mbm) is the special case of \mwbm\  where all values are $1$. The latter is only interesting when constrained by a graph; that is, when we consider \mbm[E] for some $E \sse [n] \cross [m]$.

When discussing a special case of \gap, say \mkp, we refer to the bipartite-graph constrained version as \mkp[E]. Moreover, we refer to (\gap\ LP) and (\gap[E] LP) as (\mkp\ LP) and (\mkp[E] LP), respectively. We also use similar notation for other special cases of the \gap.

\begin{figure}

\centering
\figeps[1]{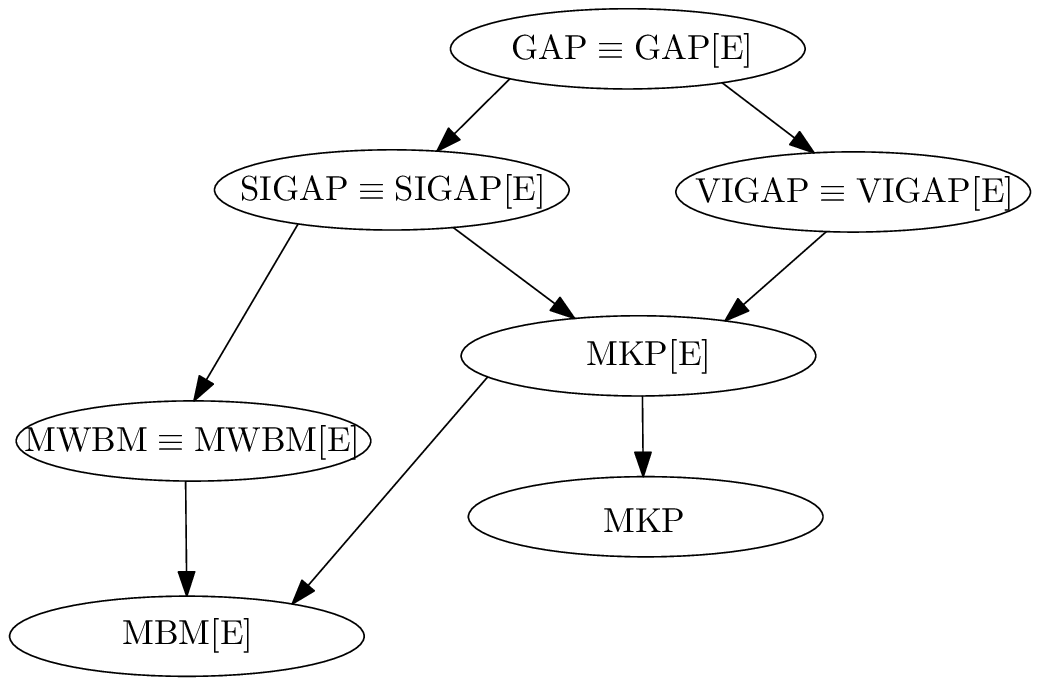}

\caption{Relationships between Assignment Problems}
Arrow indicates the second problem is a special
  case of the first. \\The $\equiv$ symbol indicates computationally
  equivalent problems.

\label{fig:prob_relationships}
\end{figure}

Figure \ref{fig:prob_relationships} illustrates the ordering of the various assignment problems by generality.
 We summarize the known algorithmic results, both upper and lower bounds, in Table \ref{table:computational}.

\begin{table}

\centering

{

\begin{tabular}{|c|c|c|c|}
\hline
{\bf Problem} & {\bf Upper } & {\bf Lower} & {\bf  Integrality} \\
{\bf } & {\bf Bound } & {\bf Bound} & {\bf  Gap} \\
\hline

\gap / \gap[E] & $\frac{e}{e-1} - \delta $ & APX-hard & $2$ \\

\hline

\sigap / \sigap[E] & $\frac{e}{e-1} - \delta $ & APX-hard & $2$ \\

\hline

\vigap / \vigap[E] & $\frac{e}{e-1} - \delta $ & APX-hard & $2$ \\

\hline

\mkp[E] & $\frac{e}{e-1} - \delta $ & APX-hard & $2$ \\

\hline

\mkp & PTAS & Strongly NP-hard & $2$ \\

\hline

\mwbm / \mwbm[E]   & $1 $ & $1$ & $1$ \\

\hline

\mbm[E]  & $1 $ & $1$ & $1$ \\

\hline

\end{tabular}
}

\caption{Computational Upper and Lower Bounds}
 $\delta$ is a small, positive, fixed real number.\\ 
   The integrality gap is given for the standard LP relaxation.
\label{table:computational}
\end{table}

\subsection{Truthfulness}\label{sec:truthfulness}
We consider the setting where jobs are selfish agents, and their private types encode information about their value for being assigned on different machines. Other information, such as $n$,$m$, $\set{s_{ij}}_{ij}$ and $\set{c_j}_{j=1}^m$ is considered public\footnote{It is conceivable that the job sizes $s_{ij}$ are private information as well. However, if jobs can lie about their sizes, we need to define the utility to a job when it is fractionally assigned--- a job can either derive fractional utility from a fractional assignment, or zero utility if it is not fully assigned to a machine. Both models are reasonable; for the first, a nontrivial proof shows that no reasonable approximation can be obtained by any randomized strategyproof mechanism. For the second model , where jobs derive no utility from partial assignments, it turns out that jobs have no incentive to lie about their sizes in any of the algorithms we design.}.

We assume that player $i$ has a type $v_i =\set{v_{ij}}_{j=1}^m$,
specifying his value for the different machines. We assume the
possible types of player $i$ are restricted to some public set $\V_i
\sse \RR^{m} $, and use $\V$ to denote $\V_1 \cross \ldots \V_n$. For
example, when working in the multiple knapsack problem, we require
that for each job $i$ there is a real number $v_i$ such that $v_{ij}= v_i$ for all
$j$. Moreover, as we will see in \S \ref{sec:limits}, no
interesting results without money are possible if possible values are
unbounded. Therefore, our positive results will assume a restricted,
discrete set of possible types described in \S \ref{sec:privategraphmodel}.

A \emph{mechanism without money} for the \gap\ is 
simply an algorithm that takes in all the problem data, public and private, and outputs
an assignment of the jobs to the machines. We allow our mechanisms to
be randomized. We use $\A(I,v)$ to denote the output of mechanism $\A$
on public data $I$ and private data $v \in \V$. Each mechanism $\A$
and instance of the public data $I$ induces a \emph{social choice
  rule} : a function $\A(I,*)$ mapping private valuations to
assignments.

We now state truthfulness without money generally.  

\begin{definition}
  Fix a mechanism without money $\A$, let $x$ denote the assignment $\A(I,v)$, and
  let $x'$ denote the assignment $\A(I,v_{-i} \union v'_i)$ when
  player $i$ changes his report to $v'_i$. Mechanism $\A$ is truthful
  if and only if the following always holds for for each $I$, $v \in \V$,
  $i$, and $v'_i \in \V_i$.
  \begin{equation}
    \label{eq:truthdet0}
    \sum_{j=1}^m v_{ij}
    x_{ij} \geq \sum_{j=1}^m v_{ij} x'_{ij}.
  \end{equation}
  Similarly, a randomized mechanism $\A$ is truthful in expectation without
  money if and only if
\begin{equation}
    \label{eq:truthexp0}
    \Ex[\sum_{j=1}^m v_{ij}
    x_{ij}] \geq \Ex[\sum_{j=1}^m v_{ij} x'_{ij}].
  \end{equation}
\end{definition}
In other words, a mechanism $\A$ is truthful if a job never benefits
from misreporting its private data. A randomized mechanism $\A$ is
truthful in expectation if no (risk-neutral) job has an incentive to
misreport its private data.

\subsection{Limits of Truthfulness without Money}\label{sec:limits}
Here, we will justify considering a discrete valuation model by
observing that, assuming general valuations, no interesting results
are possible. Indeed, we assume a restricted setting: the knapsack
problem, with a knapsack of capacity $1$, and jobs of size $1$. If
$\V_i = \RR^+$ for each $i$, then it is easy to see that this is
equivalent to the classical problem of a single item auction
\cite{Vickrey61}. It is well known that no non-trivial guarantees are
possible for a single-item auction if the mechanism is required to be
truthful-in-expectation without using money. In fact, it is easy to
see that no mechanism can outperform the trivial one which allocates
the item (in our case, the entire capacity of the knapsack) uniformly
at random, achieving an approximation ratio of $n$.

\subsection{The Private Graph Valuation Model}\label{sec:privategraphmodel}
Given that no nontrivial upperbounds
are possible when players can arbitrarily misrepresent their
values, we consider a restricted model of the
valuations: one of a discrete nature as is characteristic of many
problems for which truthfulness without money is possible. We assume
job $i$ has a value of $ \delta_{ij} v_{ij}$ for being assigned to
machine $j$, where $v_{ij}$ is public and $\delta_{ij} \in \set{0,1}$
is private. In other words, jobs may not lie about their potential value
$v_{ij}$ (that is, $v_{ij}$ are publicly known or {\em verifiable}), yet they may lie about which machines they are
\emph{compatible} with. This compatibility relation is encoded via a
bipartite graph on the jobs and machines. Each job's private data is the set of
its outgoing edges,  i.e. the machines with which it is
compatible. As we discuss in \S \ref{sec:intro}, this situation
arises in many natural settings. 

An \emph{instance} of the \emph{\gap\ on a private bipartite graph} is
a pair $(I,E)$, where $I$ is a tuple  $(\{v_{ij}\}_{ij}, \{s_{ij}\}_{ij},
\{c_j\}_j)$ of public information, and $E \sse [n] \times [m]$ is
private information solicited from the jobs. $E$ is a set of
\emph{edges} summarizing the compatibility of jobs and machines, where
job $i$'s private data is the set $E_i \sse E$ of edges in $E$
incident on $i$. A job $i$ receives value $v_{ij}$ from being assigned
to machine $j$ only if $(i,j) \in E$, else it receives value $0$.  Our
goal is to maximize welfare via a mechanism that, without using money,
incentivizes $i$ to report her set $E_i$ of edges truthfully.

We can now restate truthfulness as it applies to our model. We use
$\A(I,E)$ to denote the assignment computed by mechanism $\A$ on
instance $(I,E)$. Moreover, for an edge set $E$ we use $E_i \sse E$ to
denote the set of edges with one endpoint at job $i$, and use $E_{-i}$
to denote $E \sm E_i$.
\begin{definition}
  For a mechanism without money $\A$, let $x$ denote allocation
  $\A(I,E)$, and let $x'$ denote allocation $\A(I,E_{-i} \union
  E'_i)$.  $\A$ is truthful if and only if the following holds for
  each $I$, $E$, $i$, and $E'_i$.
  \begin{equation}
    \label{eq:truthdet1}
    \sum_{j: (i,j) \in E_i} v_{ij}
    x_{ij} \geq \sum_{j: (i,j) \in E_i} v_{ij} x'_{ij} 
  \end{equation}
  Similarly, a randomized mechanism $\A$ is truthful in expectation if and only
  if
\begin{equation}
    \label{eq:truthexp1}
    \Ex[\sum_{j: (i,j) \in E_i} v_{ij}
    x_{ij}] \geq \Ex[\sum_{j: (i,j) \in E_i} v_{ij} x'_{ij}]
  \end{equation}
\end{definition}
Note that the summation on both sides of the inequality is over the set of {\em true} edges $E_i$: that is, $\A$ is truthful [in expectation] if when a job misreports its
 incident edges, its [expected] utility from the new assignment $x'$ does not increase on its true edges $E_i$. 

\section{Combinatorial Mechanisms for Matching}

\label{sec:combinatorial}
We will show that optimally solving the maximum matching problem, with some careful tiebreaking, yields a strategyproof polynomial-time mechanism. For maximum weight matching, we show matching upper and lower bounds of 2 for truthful mechanisms without money, and a constant lowerbound for truthful-in-expectation mechanisms.

\subsection{Warmup: Maximum Bipartite Matching}
\label{sec:mm}
We consider the Maximum Bipartite Matching problem, constrained by a
private bipartite graph $E$. 
We observe that simply finding the maximum
matching, using consistent tiebreaking,  immediately gives a strategyproof mechanism. 

\begin{proposition}\label{prop:matching_tiebreaking_is_truthful}
  Fix a total order $\prec$ on matchings in the complete bipartite graph. For a set of edges $E$, let $M(E)$ denote the set of matchings on edge set $E$. Let $\A$ be the mechanism that, on input $(I,E)$, finds the $\prec$-minimal matching in the set $\argmax_{x \in M(E)}{\sum_{ij} x_{ij}}$.  Then $\A$ is truthful.
\end{proposition}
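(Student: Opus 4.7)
The plan is to argue by contradiction. Since each job's utility in \mbm\ is either $0$ or $1$, a profitable deviation by job $i$ from true edges $E_i$ to reported edges $E'_i$ would require that $i$ is unmatched in $M := \A(I, E)$ but matched in $M' := \A(I, E_{-i} \cup E'_i)$ via some edge $(i,j) \in E_i$. Because $M' \sse E_{-i} \union E'_i$, that same edge must also lie in $E'_i$, so $(i,j) \in E_i \cap E'_i$. This will give us a convenient handle on both matchings.

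The key move is to restrict attention to the intersected edge set $G := E \cap (E_{-i} \cup E'_i) = E_{-i} \cup (E_i \cap E'_i)$. I claim both $M$ and $M'$ lie in $M(G)$: the matching $M$ contains no edge incident to $i$ and hence is contained in $E_{-i}$, while $M'$'s only $i$-incident edge is $(i,j) \in E_i \cap E'_i$ and its other edges lie in $E_{-i}$. Because $M$ is also a matching in $E_{-i} \cup E'_i$ and $M'$ is also a matching in $E$, the maximality of each in its own graph forces $|M| = |M'|$; call this common value $k$. The maximum matching size on $G$ is therefore also $k$, and the set of max matchings $\argmax_{x \in M(G)} \sum_{ij} x_{ij}$ is a common subset of both $\argmax_{x \in M(E)} \sum_{ij} x_{ij}$ and $\argmax_{x \in M(E_{-i} \cup E'_i)} \sum_{ij} x_{ij}$.

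The consistent tiebreak then closes the argument: since $M$ is the $\prec$-minimum of $\argmax_{x \in M(E)} \sum_{ij} x_{ij}$, and $M$ itself lies in the subset $\argmax_{x \in M(G)} \sum_{ij} x_{ij}$, it is also the $\prec$-minimum of that subset; the identical reasoning forces $M'$ to be the $\prec$-minimum of $\argmax_{x \in M(G)} \sum_{ij} x_{ij}$ as well. Hence $M = M'$, contradicting that $i$ is matched in exactly one of them, and completing the proof. The main obstacle is identifying the correct edge set on which to invoke this tiebreak invariance: neither $E$ nor $E_{-i} \union E'_i$ alone gives a common reference point, but their intersection $G$ contains both matchings at equal maximum size, letting the $\prec$-minimality of a superset descend to the subset on both sides simultaneously.
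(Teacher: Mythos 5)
Your argument is correct and is essentially the paper's own proof: assume a profitable deviation, observe the beneficial edge lies in $E_i \cap E'_i$, deduce that both matchings are feasible under both reports and hence of equal maximum size, and let consistent tiebreaking force them to be identical. The only cosmetic difference is your detour through the intersection graph $G = E_{-i} \cup (E_i \cap E'_i)$; the paper reaches the same contradiction directly by noting that $x$ and $x'$ both lie in $M(E) \cap M(E')$ and are optimal in both, so $\prec$-minimality in each argmax set already yields $x = x'$.
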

\begin{proof} 
 Assume for a contradiction that $\A$ is not truthful. Then, there exists $I$ and $E=(E_{-i}, E_i)$ and $E'_i$ violating \eqref{eq:truthdet1}. Let $x=\A(I,E)$ and $x'=\A(I,E')$, where $E'=E_{-i} \union E'_i$. Job $i$ is not matched by an edge in $E_i$ in $x$, yet is matched by an edge in $E_i$ in $x'$. Since $\A$ only uses reported edges, $i$ is not matched at all in $x$, yet is matched by an edge $e \in E_i \intersect E'_i$ in $x'$. This implies that both $x$ and $x'$ are in $M(E) \intersect M(E')$. Observe that $\sum_{ij} x_{ij} = \sum_{ij} x'_{ij}$, otherwise either $x$ is not optimal in $M(E)$, or $x'$ is not optimal in $M(E')$, contradicting the definition of $\A$. Therefore, both $x$ and $x'$ are optimal in both $M(E)$ and $M(E')$. Recalling that algorithm $\A$ breaks ties consistently, this yields a contradiction, as needed.
\end{proof}

Therefore, it suffices to define  $\prec$  so that the $\prec$-minimal maximum-matching on $E$ can be computed in polynomial time. This gives the following proposition.

\begin{proposition}\label{prop:mm}
  There is a polynomial-time, without-money mechanism for maximum bipartite matching that is optimal, and truthful in the private graph model.
\end{proposition}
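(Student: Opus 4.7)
The plan is to invoke Proposition \ref{prop:matching_tiebreaking_is_truthful} and exhibit a single total order $\prec$ on matchings for which the $\prec$-minimal maximum matching is computable in polynomial time. Since Proposition \ref{prop:matching_tiebreaking_is_truthful} already handles truthfulness for \emph{any} such order, the entire content of this proposition is algorithmic.

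First, I would fix an arbitrary linear order on the edges of the complete bipartite graph $[n] \cross [m]$, call it $e_1, e_2, \ldots, e_{nm}$. I then define $\prec$ as lexicographic order on characteristic vectors under this edge ordering, with the convention that $M \prec M'$ when the first edge on which they differ is contained in $M$ (so we prefer matchings that include earlier edges). This is a total order on matchings of the complete bipartite graph, hence on the matchings of any reported $E$.

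Next, I would describe a greedy algorithm for the $\prec$-minimal maximum matching on $E$. Compute $k^* = $ the size of a maximum matching in $E$ using any standard polynomial-time algorithm. Initialize $M \leftarrow \emptyset$. Scan $e_1, e_2, \ldots, e_{nm}$ in order; for each $e_i = (u,j) \in E$ whose endpoints are both unmatched by the current $M$, test whether the graph obtained from $E$ by deleting all vertices currently matched by $M$ together with $u$ and $j$ admits a matching of size $k^* - |M| - 1$; if so, commit $e_i$ to $M$, otherwise skip it. This performs $O(|E|)$ auxiliary maximum-matching computations and so runs in polynomial time.

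The main obstacle is verifying correctness, i.e.\ that this greedy procedure indeed outputs the $\prec$-minimal maximum matching. The key invariant to maintain is that after processing $e_1, \ldots, e_i$, the set $M$ can be extended to a maximum matching of $E$, and every $\prec$-minimal maximum matching of $E$ agrees with $M$ on $\{e_1, \ldots, e_i\}$. The first half of the invariant is preserved by construction (we only commit $e_i$ when an extension exists, and skipping $e_i$ is safe because an extension of the prior $M$ to a maximum matching exists by induction and, not being extensible through $e_i$, must avoid $e_i$). The second half follows because if some $\prec$-minimal matching $M^\star$ disagreed with a committed $e_i$ by omitting it, then the extension witnessing the feasibility of the commitment would be lexicographically smaller than $M^\star$ on $e_1, \ldots, e_i$, contradicting minimality; the symmetric case where $M^\star$ contains a skipped $e_i$ is handled the same way. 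Taking $i = nm$ gives $M$ equal to the unique $\prec$-minimal maximum matching. Combined with Proposition \ref{prop:matching_tiebreaking_is_truthful}, this yields a polynomial-time, strategyproof, optimal mechanism in the private graph model.
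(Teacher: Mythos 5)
Your proposal is correct and follows essentially the same route as the paper: both invoke Proposition \ref{prop:matching_tiebreaking_is_truthful} with a lexicographic tie-breaking order induced by a fixed ordering of the edges of $[n]\times[m]$, and compute the extreme maximum matching by a greedy scan using polynomially many auxiliary maximum-matching computations. The only difference is cosmetic: the paper prefers to \emph{exclude} earlier edges and greedily deletes edges from $X=E$ whenever the maximum-matching size is preserved, while you prefer to \emph{include} earlier edges and greedily insert edges that extend to a maximum matching---a mirror image of the same argument, and your correctness invariant is sound.
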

\begin{proof}
  We represent each matching as a binary vector $(x_{11},x_{12},\ldots,x_{21},x_{22},\ldots,x_{nm})$ in $\set{0,1}^{nm}$, and let $\prec$ be the lexicographic order on these vectors. Then we proceed to find the $\prec$-minimal maximum matching as follows. We compute the size $OPT$ of the maximum matching (this can be done in polynomial time). We then process edges in the order they appear in the vector representation, while maintaining a working set of edges $X$ initialized to $E$. When processing an edge $e$, we check if removing $e$ decreases the size of the maximum matching in $X$ by solving the problem on edges $X \sm e$. If so we keep $e$ in $X$, else we discard $e$ by setting $X=X \sm e$. When finished, $X$ is a maximum matching; it is easy to see that $X$ is $\prec$-minimal among all maximum matchings.
\end{proof}

\subsection{Maximum Weight Bipartite Matching}
\label{sec:mwm}
Next, we consider the \mwbm\ problem, constrained by a private bipartite graph $E$.
Unlike MBM, we show constant lower bounds on the approximation ratio of truthful and truthful in expectation mechanisms.

\begin{figure}
\centering
\figeps[1]{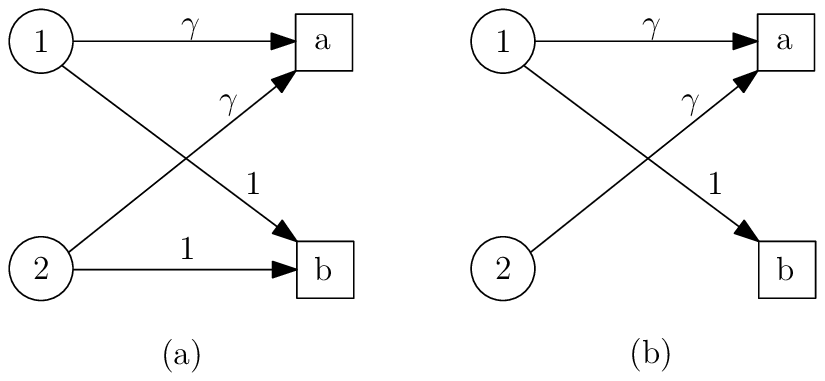}
\caption{Maximum Weight Matching Lowerbound}
Circles represent jobs, and squares represent machines.
\label{fig:mwm_lb}
\end{figure}

\begin{theorem}\label{thm:mwm_lb}
  No deterministic truthful mechanism without money for MWBM has approximation ratio better than $2$. Moreover, no truthful-in-expectation randomized mechanism gives better than a $\frac{2}{2 \sqrt{2} -1} \approx 1.0938$ approximation.
\end{theorem}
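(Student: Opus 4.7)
The plan is to exhibit one family of instances $I_v$ parameterized by $v>1$: two jobs $a_1,a_2$, two machines $b_1,b_2$, with all four edges present and weights $v_{i,1}=v$, $v_{i,2}=1$ for each $i$. Both jobs strictly prefer $b_1$, yet every OPT is a perfect matching of welfare $v+1$ assigning exactly one of them to its less-preferred machine. I will also need the two ``drop'' instances $I_v^{(i)}$ in which $a_i$ drops its edge to $b_2$; each still has OPT $v+1$, but now realized by a \emph{unique} perfect matching that sends the non-dropping job to $b_2$.

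For the deterministic bound, I would fix any truthful mechanism $\A$ and case-analyze its output $x$ on $I_v$. If $x=\{(a_1,b_1),(a_2,b_2)\}$, then $a_2$'s true-edge utility is $1$; applying \eqref{eq:truthdet1} to $a_2$'s deviation to $I_v^{(2)}$ yields $v\,x'_{21}\le 1$, which forces $x'_{21}=0$ since $v>1$ and $x'_{21}\in\{0,1\}$. With $(a_2,b_1)$ thus forbidden on $I_v^{(2)}$, the best feasible matching has welfare $v$ (namely $(a_1,b_1)$), giving approximation ratio $(v+1)/v$ on $I_v^{(2)}$. The symmetric case $x=\{(a_1,b_2),(a_2,b_1)\}$ is handled identically via $a_1$'s deviation. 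Any other $x$ already has welfare $\le v$ on $I_v$ itself. In every case some instance admits no better approximation than $(v+1)/v$, which tends to $2$ as $v\to 1^+$.

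For the randomized bound I would track matching probabilities $\pi_{ij}$, $q_{ij}$, $r_{ij}$ for $\A$ on $I_v,I_v^{(2)},I_v^{(1)}$ respectively. By averaging $\A$ under the $a_1\leftrightarrow a_2$ symmetry (which preserves truthfulness in expectation and can only improve the worst-case ratio since min is concave), I may assume $\pi_{11}=\pi_{22}=s$ and $\pi_{12}=\pi_{21}=t$ with $s+t\le 1$. Expected truthfulness of $a_2$'s deviation gives $vq_{21}\le vt+s$, i.e.\ $q_{21}\le t+s/v$; a short LP maximization of $vq_{11}+q_{12}+vq_{21}$ over the $I_v^{(2)}$-matching polytope subject to this cap gives $\Ex[W(I_v^{(2)})]\le v+t+s/v$, and symmetrically $\Ex[W(I_v^{(1)})]\le v+s+t/v$. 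Combined with $\Ex[W(I_v)]=(v+1)(s+t)$, the approximation ratio is at least $(v+1)/\min$ of these three expressions. All three are monotone in $(s,t)$ subject to $s+t\le 1$, so the mechanism's best is $s=t=1/2$, yielding $\min=(2v^2+v+1)/(2v)$ and ratio $\ge 2v(v+1)/(2v^2+v+1)$. Maximizing over $v$ reduces to $v^2-2v-1=0$, so at $v=1+\sqrt{2}$ the bound equals $2/(2\sqrt{2}-1)$.

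The main technical obstacle will be the LP step for the randomized bound: showing that the truthfulness constraint is binding in the maximization of $\Ex[W(I_v^{(2)})]$ (this is what produces the crucial $1/v$ factor) and that $s=t=1/2$ is genuinely optimal rather than some asymmetric allocation that trades $\Ex[W(I_v)]$ for larger expected welfare on the lie instances. A direct computation gives $W_1-W_2=(t-s)(v-1)/v$, so $W_1$ and $W_2$ balance precisely when $s=t$; since both $W_0$ and $W_1|_{s=t}$ are strictly increasing in $s$, the optimum must sit at the vertex $s=t=1/2$ of the matching polytope, which can in turn be realized by the uniform mixture over the two perfect matchings of $I_v$.
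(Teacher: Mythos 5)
Your proposal is correct and uses essentially the paper's construction: the same $2\times 2$ instance (weights $v$ and $1$, $v>1$) with the same edge-dropping deviation, arriving at the identical bound $\frac{2v(v+1)}{2v^2+v+1}$ optimized at $v=1+\sqrt{2}$, i.e.\ $\frac{2}{2\sqrt{2}-1}$. The only difference is bookkeeping in the randomized part: the paper skips your symmetrization-plus-LP step by observing that some job is assigned to its preferred machine with probability at most $1/2$ and bounding the welfare of that job's lie instance directly, which yields the same expression.
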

\begin{proof}
  First, consider a deterministic mechanism $\A$.  Assume for a contradiction that $\A$ attains an approximation $\alpha <2$. Consider Figure \ref{fig:mwm_lb}(a), where $\gamma >1$. Both jobs $1$ and $2$ prefer machine $a$ to machine $b$. If we let $\gamma$ be sufficiently close to $1$, $\A$ cannot leave either job unassigned. Without loss of generality, we assume job $1$ is assigned to $a$ and job $2$ is assigned to $b$. Now consider job $2$ changing his bid as in Figure \ref{fig:mwm_lb}(b). $\A$ cannot assign job $2$ at all under these new bids, as that would violate truthfulness. Therefore, when given the bids in Figure \ref{fig:mwm_lb}(b), the welfare of the solution is at most $\gamma$, whereas the optimal is $\gamma+1$. Letting $\gamma$ be sufficiently close to $1$ gives the contradiction.

  Now, we consider an arbitrary truthful-in-expectation mechanism $\A$ on Figure \ref{fig:mwm_lb}(a). At least one of the jobs must be assigned to the preferred machine $a$ with probability no more than $1/2$;  without loss of generality this is job $2$. Job $2$ derives value at most $(\gamma + 1) / 2$. Now, consider job $2$ changing his bid as in Figure \ref{fig:mwm_lb}(b).  By truthfulness, now $\A$ can assign job $2$ with probability at most $p=(\gamma + 1)/2\gamma$. Therefore, the welfare of the assignment returned  on the bids of Figure \ref{fig:mwm_lb}(b) is at most $\gamma + 1 - (1-p)\cdot 1 = \gamma + p$. However, the optimum is still $\gamma +1$, so the approximation ratio is at least \[ \frac{\gamma +1}{\gamma + p}= \frac{\gamma + 1}{\gamma + (\gamma+1)/2\gamma}\] Using elementary calculus, we can choose $\gamma$ to maximize this expression and complete the proof.
\end{proof}

Therefore, we cannot hope for better than a constant factor
approximation (specifically a PTAS, randomized or deterministic, is not
possible). We will show a factor $2$ deterministic truthful mechanism,
matching  Theorem \ref{thm:mwm_lb}.

\begin{algorithm}
\caption{Mechanism for \mwbm\ on Private Bipartite Graph}
\label{alg:mwbm}
\begin{algorithmic}[1]
\STATE Order pairs $(i,j) \in [n] \times [m]$ in decreasing order of
  $v_{ij}$, breaking ties arbitrarily.
\STATE Let $X= \emptyset$. 
\FORALL{$e \in E$ in the order defined above}
    \IF{$X \union \set{e}$ is a matching} \STATE Let  $X= X \union \set{e}$ \ENDIF
\ENDFOR
\RETURN $X$
 \end{algorithmic}
\end{algorithm}

Consider the greedy Algorithm \ref{alg:mwbm}. Notice that step (1)
{\em does not depend} on the reported edges $E$. 
\begin{theorem}
  Algorithm \ref{alg:mwbm} is a polynomial-time, $2$ approximate, no-money
  truthful mechanism for maximum weight bipartite matching in the private
  graph model.
\end{theorem}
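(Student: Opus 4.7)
The theorem makes three claims about Algorithm \ref{alg:mwbm}: polynomial running time, $2$-approximation, and truthfulness. Polynomial time is immediate (a single sorted pass over $[n]\times[m]$). For the approximation bound I would invoke the classical charging argument for greedy weighted matching: letting $M$ be the greedy output and $M^*$ an optimum, for each $e \in M^{*}\setminus M$ there is a blocker $b(e) \in M$ sharing an endpoint with $e$ and processed strictly earlier, so $v(b(e)) \geq v(e)$; each $e' \in M\setminus M^{*}$ serves as a blocker for at most one $M^*$-edge per endpoint, i.e.\ at most two in total. This gives $v(M^{*}\setminus M) \leq 2v(M\setminus M^{*})$ and hence $v(M^{*}) \leq 2v(M)$.

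The substantive part is truthfulness, which I would prove by contradiction. Suppose job $i$ strictly benefits by reporting $E'_{i}$ in place of its true $E_{i}$; write $M$ and $M'$ for the algorithm's outputs on $E = E_{-i}\cup E_{i}$ and $E' = E_{-i}\cup E'_{i}$. Then $i$ is matched in $M'$ to some $j^{*}$ with $(i,j^{*})\in E_{i}$ and $v_{ij^{*}}$ strictly greater than $i$'s utility under $M$. The key structural fact is that Step~(1) orders \emph{all} of $[n]\times[m]$ using only public values, so both runs process pairs in the same global sequence and differ only in which $i$-incident pairs are ``active.'' Let $t^{*}$ be the step at which $(i,j^{*})$ is processed. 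In the truthful run at $t^*$, job $i$ must be free (otherwise $i$ is already matched to some $j$ earlier in the order, yielding utility $\geq v_{ij^{*}}$, a contradiction) and $j^{*}$ must be matched by some $(i',j^{*})\in M$ with $i'\neq i$ (otherwise $(i,j^{*})$ would be added to $M$, again giving utility $v_{ij^{*}}$).

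From here I would iterate an alternating argument through $M \triangle M'$. The edge $(i',j^{*})$ is not incident on $i$, hence present in both $E$ and $E'$; lying in $M\setminus M'$, it was blocked in the $M'$-run, and since $j^{*}$ stays free in $M'$ until step $t^{*}$ the block is due to $i'$ already being matched via some $(i',k)$ strictly earlier in the order. Symmetrically $(i',k)\in M'\setminus M$; in the $M$-run it cannot have been blocked by $i'$ (which would force $i'$ to be matched in $M$ to something other than $j^{*}$, contradicting $(i',j^*)\in M$), so $k$ is matched in $M$ via some $(i_{1},k)$ strictly earlier than $(i',k)$. Continuing, each hop produces a new edge strictly earlier in the fixed total order than the previous one, yielding an infinite descending chain in a finite ordered set---the desired contradiction. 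The main obstacle I expect is the per-step bookkeeping: at each hop I must rule out that the blocker is the endpoint we just came from (handled by the $(i',j^*)\in M$ uniqueness trick above, replicated at each alternation) and must lean on the consistent tiebreaking in Step~(1) to guarantee strict predecessor order throughout the chain.
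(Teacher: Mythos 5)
Your three components are all sound, and your running-time and charging arguments are exactly what the paper intends (it simply cites the standard charging argument). For truthfulness, however, you take a genuinely different route. The paper's proof is a short ``first divergence'' observation: since step (1) fixes a report-independent processing order, the executions on $E$ and $E' = E_{-i}\cup E'_i$ behave identically up to the first edge (in that order) on which the outputs differ, so that edge must lie in $E_i \triangle E'_i$; in the paper's phrasing, the first edge of $X'\setminus X$ lies in $E'_i\setminus E_i$, i.e.\ the first thing $i$ can ``gain'' by lying is a fabricated edge worth nothing, and a brief case check (any true edge matching $i$ in $X'$ would come later in the order than the first divergence, hence be worth no more than $i$'s match in $X$) finishes. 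You instead anchor at the allegedly profitable edge $(i,j^*)$ and walk backward through an alternating chain in $M\triangle M'$ with strictly decreasing positions, concluding by infinite descent. This works: your anchor facts ($i$ free in the truthful run just before $(i,j^*)$ is processed, $j^*$ free in the lying run until then) are exactly what is needed, and the per-step bookkeeping you flag does go through; the one case you should state explicitly is that the chain never returns to an edge incident on $i$, which is needed each time you assert the current chain edge lies in both $E$ and $E'$ --- it is handled by the same anchors, since an $M$-edge at $i$ processed before $(i,j^*)$ contradicts $i$ being free there, and an $M'$-edge at $i$ would have to be $(i,j^*)$ itself, impossible as positions strictly decrease. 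The trade-off: the paper's argument is shorter and localizes everything to a single edge, while yours is longer but makes the mechanics of the two executions fully explicit and does not lean on the ``it is easy to see'' step that the paper leaves to the reader.
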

\begin{proof}
  The approximation ratio immediately follows from a standard charging argument against the optimal solution.

  For truthfulness, consider a job $i$ misrepresenting his true edges $E_i$ as $E'_i$. Let
  $E' = E_{-i} \union E'_i$. Let $X$ be the matching returned by the
  algorithm on reports $E$, and let $X'$ be the matching returned on
  reports $E'$. If $X=X'$, then $i$ does not improve his
  value. Assume $X \neq X'$, and let $e' \in E'$ be the
  first edge in $X' \sm X$ according to the order of step (1). Since
  the algorithm processes edges in the bid-independent order of step
  (1), it is easy to see that $e' \in E' \sm E = E'_i \sm E_i$. Thus,
  $i$ is matched to $e' \neq E_i$, an edge from which he derives no value, when he reports $E'_i$. This completes the proof.
\end{proof}

\section{LP-Based Mechanisms for  Knapsack type Problems}\label{sec:lpbased}
The bipartite matching problems studied in the previous section can be solved in polynomial time; there, the need for approximation is a result purely of the requirement of strategyproofness. We now investigate knapsack-like variants of the \gap\ --- unlike matching, these problems are NP-hard (in fact, they are APX-hard). The mechanisms we design have the following common structure: first, we design a truthful mechanism without money for the {\em fractional} LP relaxation of the problem. Then, as in \cite{laviswamy}, we use a randomized procedure to obtain a feasible integral assignment from the fractional solution--- this composition leads to a truthful-in-expectation mechanism without money. We introduce this technique for designing truthful mechanisms without money in \S \ref{sec:fractional_reduction}, and then use it to design mechanisms without money for knapsack type assignment problems in \S \ref{sec:mkp}-\S \ref{sec:gap}. Some of our analyses will use notions from network flow theory, which we recap in Appendix \ref{app:flow}.

\subsection{A Reduction to Fractional Truthfulness}\label{sec:fractional_reduction}

The construction of Lavi and Swamy allows us to reduce constructing a truthful-in-expectation mechanism (with money) for the integral problem to constructing a truthful mechanism for a fractional version of the problem. We redevelop their construction, in a form convenient for our purposes, in Appendix \ref{app:laviswamy_recap}. While the truthful mechanism they construct for the  fractional \emph{welfare maximization problem} (defined in Appendix \ref{app:laviswamy_recap})  simply solves the problem optimally and uses VCG payments, we observe that this need not be the case. Indeed, this is crucial for our purposes; for some of the assignment problems we are interested in, the optimal algorithm for the fractional problem requires non-zero payments for truthfulness. Instead, we sacrifice optimality in the fractional solution to get a truthful fractional mechanism \emph{with zero payments}. Then we use the fractional mechanism to get a scaled down truthful-in-expectation mechanism -- also with zero payments -- for the combinatorial problem as in Theorem \ref{thm:laviswamy}.

By examining the proof of Theorem \ref{thm:laviswamy}, we notice that the assumption that the fractional mechanism solves the LP exactly is not used. In fact, an arbitrary truthful mechanism $M_{frac}$ for the fractional problem can be converted to a truthful-in-expectation mechanism $M_{exp}$ for the combinatorial problem. If $M_{frac}$ is a $\beta$-approximation algorithm for the fractional problem, then $M_{exp}$ is an $\alpha \cdot \beta$ approximation algorithm for the combinatorial problem, where $\alpha$ is the integrality gap of the LP relaxation. Moreover, by examing the proof of Theorem \ref{thm:laviswamy}, we notice that the payment scheme $p_{exp}$ of mechanism $M_{exp}$ is simply a scaled down copy of the payment scheme $p_{frac}$ of $M_{frac}$. In particular, if $M_{frac}$  is a truthful mechanism without money for the fractional problem, then $M_{exp}$ is a truthful-in-expectation mechanism without money for the combinatorial problem. We sum up these observations in the following Lemma.

\begin{lemma}\label{lem:fractional_suffices}
  Assume the fractional welfare maximization problem over polytope $P$ and valuation class $\V$ (as defined in Appendix \ref{app:laviswamy_recap}) admits a $\beta$-approximate mechanism that is truthful without money.  Moreover, assume $P$ satisfies the conditions of Lemma \ref{lem:laviswamy-containment} with integrality gap $\alpha$.  Then there exists an efficient truthful-in-expectation $\alpha \cdot \beta $-approximate mechanism for the welfare maximization problem over $P$ and $\V$ that does not use money.
\end{lemma}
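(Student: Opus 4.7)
The plan is to inspect the original Lavi--Swamy construction described in Appendix \ref{app:laviswamy_recap} and observe that, when traced through carefully, none of its steps actually depend on the fractional mechanism $M_{frac}$ being an exact optimizer of the welfare LP over $P$. This is essentially what the paragraph preceding the lemma asserts, and the proof consists of making that observation precise. Given any reported type profile $v \in \V$, let $x^\ast = M_{frac}(v) \in P$ be the (possibly suboptimal) fractional allocation returned by the assumed truthful-without-money mechanism $M_{frac}$. By Lemma \ref{lem:laviswamy-containment}, the scaled point $x^\ast/\alpha$ lies in the convex hull of the integer points of $P$, so the decomposition subroutine of Appendix \ref{app:laviswamy_recap} efficiently writes $x^\ast/\alpha = \sum_{S} \lambda_S \chi_S$ as a convex combination of feasible integral allocations, and the mechanism $M_{exp}$ samples $\chi_S$ with probability $\lambda_S$ and charges zero.

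Next I would verify truthfulness-in-expectation of $M_{exp}$. The key point is that agent $i$'s expected allocation under $M_{exp}(v)$ is exactly $x^\ast_i/\alpha$, so her expected utility on her true type is $v_i(x^\ast)/\alpha$. If $i$ reports $v'_i$ instead, her expected utility becomes $v_i(M_{frac}(v_{-i}, v'_i))/\alpha$. Since $M_{frac}$ is truthful without money over $\V$, the no-money truthfulness inequality gives $v_i(x^\ast) \geq v_i(M_{frac}(v_{-i}, v'_i))$, and dividing by $\alpha>0$ preserves the inequality. Moreover, as noted just above the lemma, the Lavi--Swamy payment rule for $M_{exp}$ is (up to a scaling by $1/\alpha$) a copy of the payment rule of $M_{frac}$, so zero payments in $M_{frac}$ immediately translate to zero payments in $M_{exp}$.

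For the approximation guarantee, let $\mathrm{OPT}_{frac}(v)$ and $\mathrm{OPT}_{int}(v)$ denote the optimal fractional and integral welfare, respectively. The $\beta$-approximation hypothesis gives $v(x^\ast) \geq \mathrm{OPT}_{frac}(v)/\beta$, and since the LP relaxation can only increase the objective, $\mathrm{OPT}_{frac}(v) \geq \mathrm{OPT}_{int}(v)$. Therefore the expected welfare of $M_{exp}$ is
\[
\Ex[v(M_{exp}(v))] \;=\; v(x^\ast)/\alpha \;\geq\; \mathrm{OPT}_{int}(v)/(\alpha\beta),
\]
which is the claimed $\alpha\beta$-approximation. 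Efficiency of $M_{exp}$ follows from the efficiency of $M_{frac}$, Lemma \ref{lem:laviswamy-containment}, and the decomposition procedure.

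The main obstacle is conceptual rather than technical: one has to identify which property of $M_{frac}$ the Lavi--Swamy argument actually uses. Their original presentation phrases everything in terms of ``solve the LP optimally and charge VCG prices,'' which can obscure the fact that the decomposition, the scaling of the allocation by $1/\alpha$, and the scaling of the payments by $1/\alpha$ all go through verbatim for any truthful fractional mechanism whose output lies in $P$. Once this is spelled out, each of the three verifications above (truthfulness, zero payments, approximation) reduces to a one-line scaling argument.
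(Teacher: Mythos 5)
Your proposal is correct and follows the same route as the paper: the paper's argument is exactly the observation that the Lavi--Swamy construction (Theorem \ref{thm:laviswamy} and Lemma \ref{lem:laviswamy-containment}) never uses optimality of the fractional mechanism, only its truthfulness and the fact that its output lies in $P$, with payments scaled by $1/\alpha$ so that zero payments remain zero. Your write-up just spells out in slightly more detail the same three one-line scaling verifications (truthfulness, zero payments, and the $\alpha\beta$ welfare bound via $v(x^\ast)\geq \mathrm{OPT}_{frac}/\beta \geq \mathrm{OPT}_{int}/\beta$) that the paper leaves implicit.
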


In other words, we reduce the problem of designing a truthful-in-expectation mechanism without money to that of designing such a mechanism for its fractional relaxation. This will prove particularly useful,  since arguing about truthfulness of a continuous fractional assignment algorithm is more tractable than designing a combinatorial algorithm directly. Moreover, since the integrality gap of (\gap\ LP) is $2$,  an $\alpha$-approximate mechanism for a fractional assignment problem gives a $2 \alpha$-approximate mechanism for the integral problem. (Recall that the algorithm of \cite{ST,CK} shows an integrality gap of $2$ for \gap\  as needed for Lemma \ref{lem:laviswamy-containment}.)

\begin{corollary}\label{cor:gap_fractional_suffices}
  Consider any special case of the \gap. If the fractional version of the problem admits a $\beta$-approximate mechanism that is truthful without money, then there exists a $2 \beta$-approximate truthful-in-expectation mechanism for the combinatorial problem without money.
\end{corollary}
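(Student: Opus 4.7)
The plan is to show that this corollary is essentially a direct specialization of Lemma \ref{lem:fractional_suffices}, instantiated at the integrality gap $\alpha = 2$. First I would observe that any special case of the \gap\ --- \sigap, \vigap, \mkp, and so on --- has a natural LP relaxation obtained by restricting the (\gap\ LP) to the subclass of valuations and side constraints defining the problem, and that this restriction does not increase the integrality gap. In particular, each such LP has integrality gap at most $2$, since the rounding of Shmoys--Tardos and Chekuri--Khanna \cite{ST,CK} produces, in polynomial time, an integral assignment of welfare at least half that of any feasible fractional solution.

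Next, I would verify that the polytope $P$ defining the fractional problem satisfies the hypotheses of Lemma \ref{lem:laviswamy-containment}. The content here is simply that the rounding of \cite{ST,CK} is a polynomial-time algorithm that, given any point $x \in P$, produces a distribution over integral points in $P$ whose expectation dominates $x/2$ coordinate-wise; this is exactly the ``integrality gap verifier'' that the Lavi--Swamy machinery requires, and inspection of \cite{ST,CK} shows that their rounding has precisely this form (it is essentially a dependent rounding scheme that preserves per-edge marginals). This step should be routine given the cited results, but it is the only place where anything beyond a pure invocation of prior statements is needed.

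Having done this, I would simply apply Lemma \ref{lem:fractional_suffices} with $\alpha = 2$ and the given $\beta$. The lemma takes as input a $\beta$-approximate fractional mechanism truthful without money and produces an efficient truthful-in-expectation $\alpha\beta$-approximate mechanism without money for the combinatorial problem. Plugging in $\alpha = 2$ yields the desired $2\beta$-approximate mechanism, completing the proof.

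The main obstacle, to the extent that there is one, is bookkeeping: one must check that the notion of ``special case of \gap'' used in the corollary is compatible with the polytope/valuation-class framework of Lemma \ref{lem:fractional_suffices}, and in particular that restricting valuations (e.g., to the value-invariant or size-invariant class) does not break the integrality-gap-$2$ rounding. Since the rounding of \cite{ST,CK} works pointwise on any feasible fractional assignment and preserves the job-to-machine support, this compatibility is immediate, and no further work is required.
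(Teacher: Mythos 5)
Your overall route is the same as the paper's: the corollary is obtained by applying Lemma \ref{lem:fractional_suffices} with $\alpha=2$, where the integrality gap of $2$ for (\gap\ LP) is the one shown by the rounding of \cite{ST,CK}. Two points in your middle step, however, are off. What Lemma \ref{lem:laviswamy-containment} requires is only a polynomial-time algorithm \emph{showing} an integrality gap of $2$ in the paper's sense: given an arbitrary objective $c$, output an integral point of $P$ with value at least half the LP optimum. From this verifier the Lavi--Swamy construction itself builds, for every $x \in P/2$, a distribution $D_x$ over integral points with $\Ex_{y\sim D_x}[y]=x$ \emph{exactly}. Your claim that the \cite{ST,CK} rounding is a marginal-preserving dependent rounding whose expectation ``dominates $x/2$ coordinate-wise'' is both unsupported and, even if true, not the right property: coordinate-wise dominance would suffice for the approximation ratio but not for truthfulness, since truthfulness-in-expectation of the scaled-down mechanism rests on each player's expected value being \emph{exactly} half its fractional value, so that the fractional incentive inequality \eqref{eq:fractruth} transfers verbatim with zero payments. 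Mere dominance could inflate a deviating report's slack and destroy incentives.

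Second, the genuine bookkeeping issue is not the one you address. Restricting the valuation class (to value-invariant or size-invariant objectives) is harmless, since the integrality gap is a property of the polytope over all objectives. The real subtlety, which the paper flags explicitly, is that the bipartite-graph LPs such as \mkp[E] LP and \sigap[E] LP encode the \emph{private} edges as constraints $x_{ij}=0$ for $(i,j)\notin E$, so their polytopes depend on the reports and do not fit the Lavi--Swamy framework, which needs a fixed packing polytope with valuations in the objective. The paper's fix is to re-interpret the fractional mechanism's output as a feasible point of the report-independent (\gap\ LP) polytope, which contains every edge-constrained polytope, and to perform the decomposition of Lemma \ref{lem:laviswamy-containment} there; exactness of the expectation then preserves truthfulness. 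With these two corrections your argument coincides with the paper's.
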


A note is in order on the LP relaxations of the \gap\ and various cases used in this reduction. Observe that the LP's for the variants of the \gap\ on a bipartite graph \emph{do not} fit the framework of \cite{laviswamy}. This is because the valuation -- i.e. the edges -- are encoded explicitly in the polytope and not in the objective. However, this is not a problem for us, since the equivalent \gap\ LP \emph{does} fit the framework. Therefore, after getting a fractional solution to, say, \mkp[E], we can simply re-interpret it as a fractional solution to (\gap\ LP) and perform the reduction of Lavi and Swamy.

\subsection{The Multiple Knapsack Problem}\label{sec:mkp}

We consider the multiple knapsack problem on a private bipartite graph $E$. 
First, we make the simple observation that, if we ignore computational constraints, there exists a truthful optimal mechanism for the multiple knapsack problem in the private graph model. As in maximum (unweighted) bipartite matching,  simply returning an optimal solution, breaking ties consistently, leads to a strategyproof mechanism. 

\begin{proposition}\label{prop:mkp_opt_is_truthful}
  Consider the without-money mechanism that, on reports $E \sse [n] \cross [m]$, finds the optimal integral solution to \mkp[E] LP, breaking ties consistently via an arbitrary total order $\preceq$ on the set of assignments $([m] \union \set{*})^{[n]}$. This mechanism is truthful in the private graph model.
\end{proposition}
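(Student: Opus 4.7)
The plan is to mirror the argument used for Proposition \ref{prop:matching_tiebreaking_is_truthful}. I would assume for contradiction that some job $i$ strictly benefits by misreporting its true edge set $E_i$ as $E'_i$, and derive a contradiction from consistent tie-breaking. Write $E = E_{-i} \cup E_i$, $E' = E_{-i} \cup E'_i$, $x = \A(I, E)$, and $x' = \A(I, E')$.

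The first step is to characterize what a profitable deviation must look like. Since in \mkp\ every edge incident to $i$ has the same value $v_i$, and since $\A$ only uses reported edges, $i$'s true utility from $x$ is $v_i$ if $x(i) \neq *$ and $0$ otherwise, while $i$'s true utility from $x'$ is $v_i$ if $x'(i) = j$ with $(i,j) \in E_i$ and $0$ otherwise. Therefore any strict improvement forces $x(i) = *$ and $x'(i) = j$ for some $j$ with $(i,j) \in E_i \cap E'_i$.

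The second step is an exchange argument showing that both $x$ and $x'$ are simultaneously feasible (and thus optimal) for both \mkp[E] and \mkp[E']. Note that $E$ and $E'$ differ only on edges incident to $i$. Since $x(i) = *$, the assignment $x$ uses only edges in $E_{-i} \subseteq E \cap E'$, so it is feasible on $E'$. Since $x'$ assigns $i$ via an edge in $E_i \cap E'_i \subseteq E \cap E'$ and uses only $E_{-i}$ for the remaining jobs, $x'$ is feasible on $E$; feasibility of the capacity constraints transfers automatically because $x$ and $x'$ refer to the same common sizes $\{s_i\}$ and capacities $\{c_j\}$. Combining with optimality of $x$ on $\mkp[E]$ and of $x'$ on $\mkp[E']$, we get $\sum_{ij} v_i x_{ij} = \sum_{ij} v_i x'_{ij}$, so both $x$ and $x'$ are optimal for \mkp[E] and for \mkp[E'] simultaneously.

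The final step invokes the consistent tie-breaking rule. Because both $x$ and $x'$ lie in the optimal set for each of the two instances and $\A$ picks the $\preceq$-minimal optimal assignment in each case, we must have $\A(I, E) = \A(I, E')$, i.e.\ $x = x'$. This contradicts $x(i) = *$ and $x'(i) = j$, completing the proof. I do not expect a genuine obstacle: the argument is entirely parallel to Proposition \ref{prop:matching_tiebreaking_is_truthful}, the only new content being the observation that in \mkp\ any profitable misreport must move $i$ from unassigned to assigned via a true edge, which is precisely the structural fact that makes the exchange between $x$ and $x'$ go through.
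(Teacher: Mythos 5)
Your proposal is correct and follows essentially the same argument as the paper's proof: a profitable misreport forces $x(i)=*$ and $x'(i)=j$ with $(i,j)\in E_i\cap E'_i$, whence $x$ is feasible for $\mkp[E']$ and $x'$ for $\mkp[E]$, so both are optimal for both instances and consistent tie-breaking forces $x=x'$, a contradiction. The only difference is that you spell out the characterization of a profitable deviation and the final tie-breaking step in slightly more detail than the paper does.
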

\begin{proof}
  Fix a player $i$ with true edges $E_i$, and fix the reported edges $E_{-i}$ of the other players. Let $x$ be the assignment on reports $E=(E_i,E_{-i})$, and let $x'$ be the assignment on reports $E'=(E'_i,E_{-i})$. Assume for a contradiction that truthfulness is violated, in particular that $x(i)=*$, yet $x'(i)=j$ for some machine $j$ where $(i,j) \in E_i$.

  Recall that $x$ is the optimal feasible solution for $\mkp[E]$, and $x'$ is the optimal feasible solution for $\mkp[E']$. Since $i$ is unassigned in $x$, we know that $x$ is feasible for $\mkp[E']$ as well. Moreover, since $x'$ assigns $i$ using an edge in $E$, we know $x'$ is feasible for $\mkp[E]$. We conclude that each of $x$ and $x'$ is feasible and optimal for $\mkp[E]$ and $\mkp[E']$. This contradicts the consistent tie-breaking of the mechanism.
\end{proof}

The above implies that, unlike maximum weight matching and its various generalizations, \mkp\ is not fundamentally incompatible with truthfulness without money -- at least when ignoring computational constraints. Nevertheless, MKP on a bipartite graph is APX-hard. Therefore, we consider the problem of finding a truthful constant-factor approximation. Though a simple greedy algorithm gives a deterministic, truthful $2+\epsilon$ approximation, we will instead illustrate our techniques from Section \,\ref{sec:fractional_reduction} -- which will also come in handy for other generalizations of the \gap\ -- by designing a randomized, $2$-approximate, truthful-in-expectation mechanism for $\mkp$ in our model.

By the discussion in \S \ref{sec:fractional_reduction}, it suffices to devise a truthful fractional algorithm in the sense of Equation \eqref{eq:fractruth} of Appendix \,\ref{app:laviswamy_recap}. In this section, we show that solving (\mkp[E] LP) optimally, with careful tiebreaking, yields such an algorithm.  By Corollary \ref{cor:gap_fractional_suffices}, this yields a $2$-approximate truthful-in-expectation mechanism for MKP in the private graph model.

\begin{algorithm}
\caption{Fractional Mechanism for \mkp\ on Private Bipartite Graph}
\label{alg:mkpfrac}
\begin{algorithmic}[1]
  \REQUIRE Public instance of \mkp, and reported edges $E$.
  \ENSURE A feasible optimal solution $x$ for \mkp[E] LP
  \STATE Fix an arbitrary order on the edges of the complete bipartite
  graph $[n] \times [m]$. Let $\prec$ be the lexicographic order on
  $[\RR]^{[n] \times [m]}$ corresponding to the order on edges.
  \STATE Find the optimal solution $x$ to \mkp[E] LP, breaking ties
  according to $\prec$.
  \RETURN $x$
\end{algorithmic}
\end{algorithm}

Consider Algorithm \,\ref{alg:mkpfrac} for the fractional multiple knapsack problem on a bipartite graph. First of all, it is easy to see that Algorithm \,\ref{alg:mkpfrac} can be implemented in polynomial time by solving a sequence of linear programs in step (2).
In order to show truthfulness, we will use a flow-based interpretation of Algorithm \,\ref{alg:mkpfrac}. For reported edges $E$, we define graph $G[E]$, seen in Figure \ref{fig:mkp_flow}, as follows. There is a node for each job, and a node for each machine. We connect job $i$ to machine $j$ if $(i,j) \in E$, with weight $w_{(i,j)} = 0$ and capacity $c_{(i,j)} = \infty$. Next, we include a source node $s$, and create an edge $(s,i)$ for each job $i$ with weight $w_{(s,i)} = v_i / s_i$ and capacity $c_{(s,i)} = s_i$. We then create a sink $t$ and create an edge $(j,t)$ for each machine $j$, with weight $w_{(j,t)} = 0$ and capacity $c_{(j,t)} = c_j$. Finally, we connect the sink to the source via an edge $(t,s)$ with $w_{(t,s)} = 0$ and capacity $c_{(t,s)} = \infty$.

\begin{figure}
\centering
\figeps[1]{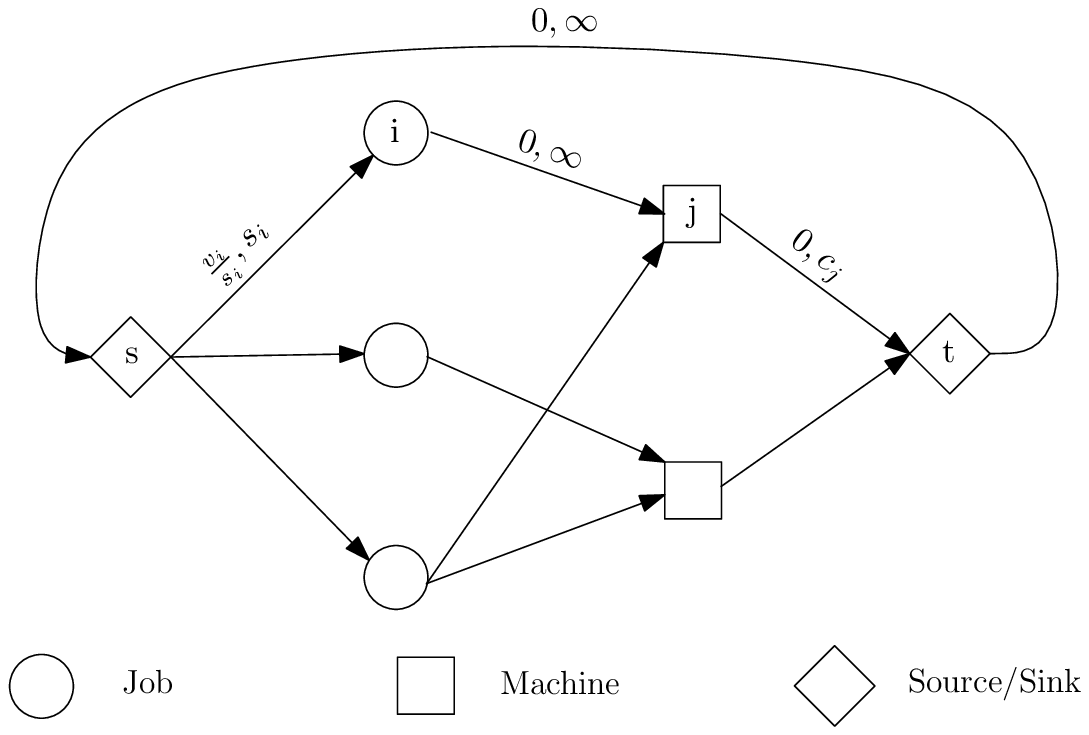}
\caption{Flow Interpretation of the Multiple Knapsack Problem}
Edges are labeled with (weight,capacity) pairs.
\label{fig:mkp_flow}
\end{figure}

Observe that fractional assignments  $[0,1]^{[n] \times [m]}$ are in one to one correspondence with feasible circulations in the complete bipartite graph $G[[n] \times [m]]$. Moreover, feasible solutions of \mkp[E] LP are in one-to-one correspondence with feasible circulations in $G[E]$. In particular, assignment $x$ feasible for \mkp[E] LP maps to the unique feasible circulation $f_x$ on $G[E]$ satisfying $f_x((i,j)) = x_{ij} s_i$ for each job $i$ and machine $j$. Notice, also, that the value (i.e. the welfare) of assignment $x$ is the same as the weight of the circulation $f_x$. Moreover, we define a total order on circulations in $G[[n] \times [m]]$ that corresponds to the lexicographic order $\prec$ defined on $[\RR]^{[n] \times [m]}$. We abuse notation and use $\prec$ to refer to both total orders, and let $f_x \prec f_y$ if and only if $x \prec y$. Notice that $\prec$ also orders feasible circulations lexicographically.  Therefore, we can interpret Algorithm \,\ref{alg:mkpfrac} as finding the $\prec$-minimal maximum-weight feasible circulation in $G[E]$, and then converting it to the corresponding assignment.

Next, we show that Algorithm \,\ref{alg:mkpfrac} is a truthful fractional mechanism, that is, a player $i$ cannot benefit by misrepresenting his edges $E_i$ as some $E'_i$. For this, if the algorithm returns assignment $x$ when reported edges are $E$, and assignment $x'$ when reported edges are $E'=E_{-i} \union E'_i$, then we must have $x(E_i) \geq x'(E_i)$ (note that this is because $v_{ij} = v_i$ in MKP). Equivalently, we need to show $\sum_{e \in E_i} f_x(e) \geq \sum_{e \in E_i} f_{x'} (e)$, where we use the convention $f(e) = 0$ when $f$ is a flow on $G[E]$ and $e \notin E$ (and the same for $G[E']$).

We will show that any increase in job $i$'s utility after lying implies that one of $f_x$ or $f_{x'}$ is suboptimal, yielding a contradiction. We will use notions from network flow, developed in Appendix \ref{app:flow}. We begin with the following lemma.

\begin{lemma}\label{decompose-difference}
  Let circulation $\Delta$ be the difference between $f_{x'}$ and $f_x$; i.e. $\Delta=f_{x'}-f_x$. If $\sum_{e \in E_i} f_{x'}(e) > \sum_{e \in E_i} f_x(e)$ then $\Delta$ can be conformally decomposed into $\set{C,\Delta - C}$ where: (1) $C$ is a flow cycle, and (2) $C$ sends positive flow on both $(s,i)$ and some $e \in E_i
    \intersect E'_i$. 
\end{lemma}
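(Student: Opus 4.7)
The plan is to apply a flow-decomposition argument on an auxiliary graph obtained by orienting edges according to the sign of $\Delta$. Specifically, consider the ``signed graph'' in which each edge $e$ with $\Delta(e) > 0$ is kept in its original direction, each $e$ with $\Delta(e) < 0$ is reversed, and $|\Delta|$ is the resulting non-negative circulation. A conformal decomposition of $\Delta$ is then exactly a decomposition of $|\Delta|$ into directed cycles of the signed graph, and our job is to exhibit one cycle of the desired form.

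The first step is to extract two facts from the hypothesis. Since $f_{x'}$ is zero outside $E'$, the hypothesis rearranges to
$$\sum_{e \in E_i \cap E'_i} \Delta(e) \;>\; \sum_{e \in E_i \sm E'_i} f_x(e) \;\geq\; 0.$$
Flow conservation at $i$ gives $\Delta((s,i)) = \sum_{j} \Delta((i,j))$; splitting the right side into the three groups $E_i \cap E'_i$, $E_i \sm E'_i$, $E'_i \sm E_i$ and using the rearranged hypothesis then yields $\Delta((s,i)) > 0$. In particular $(s,i)$ is a forward edge in the signed graph, and some $(i,j^\star) \in E_i \cap E'_i$ with $\Delta((i,j^\star)) > 0$ exists.

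Next, delete $(s,i)$ from the signed graph. The restriction of $\Delta$ is now a $b$-flow with supply $\Delta((s,i))$ at $i$ and equal demand at $s$; all other nodes remain balanced, since $(s,i)$ was the only edge touched. By the standard decomposition of $b$-flows, this flow decomposes into directed $i \to s$ paths of total value $\Delta((s,i))$, together with a circulation that further decomposes into cycles. The outgoing edges from $i$ in the signed graph are exactly those $(i,j)$ with $\Delta((i,j)) > 0$; these lie in $(E_i \cap E'_i) \cup (E'_i \sm E_i)$, because for $j \in E_i \sm E'_i$ we have $\Delta((i,j)) = -f_x((i,j)) \leq 0$ and the edge is reversed. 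The key claim is that at least one path leaves $i$ through some $(i,j^\star)$ with $j^\star \in E_i \cap E'_i$. If not, the total path-flow leaving $i$ would be at most $\sum_{j \in E'_i \sm E_i} \Delta((i,j))$, and combining with the flow-conservation identity $\Delta((s,i)) = \sum_j \Delta((i,j))$ forces $\sum_{j \in E_i \cap E'_i} \Delta((i,j)) \leq \sum_{j \in E_i \sm E'_i} f_x((i,j))$, directly contradicting the first displayed inequality.

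Finally, concatenate such a path with the edge $(s,i)$ to obtain a directed cycle $\tilde C$ in the signed graph traversing both $(s,i)$ and $(i,j^\star) \in E_i \cap E'_i$ in their forward directions. Taking $C := \lambda \cdot \mathbf{1}_{\tilde C}$ with $\lambda := \min_{e \in \tilde C} |\Delta(e)| > 0$, we get a flow cycle satisfying (1) and (2), and $\Delta - C$ remains conformal with $\Delta$ by the choice of $\lambda$; so $\set{C, \Delta - C}$ is the desired conformal decomposition. The main obstacle is the bookkeeping at $i$ in the key claim: one must carefully partition the edges incident to $i$ into the three groups $E_i \cap E'_i$, $E_i \sm E'_i$, $E'_i \sm E_i$ and use \emph{both} the strict inequality $\sum \Delta > \sum f_x$ and the sign information $\Delta((i,j)) \leq 0$ on $E_i \sm E'_i$ to rule out path-flow that bypasses $E_i \cap E'_i$.
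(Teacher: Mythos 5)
Your proof is correct. The paper's own argument has the same flavor but a shorter execution: it conformally decomposes $\Delta$ into flow cycles directly (Theorem~\ref{conformal-decomposition-cycles}), observes that since $\Delta$ carries strictly positive total flow on $E_i$ some cycle must exit $i$ on an edge of $E_i$ with positive flow while entering $i$ on an edge outside $E_i$, and then uses exactly the two sign facts you rely on ($f_{x'}\equiv 0$ on $E_i \sm E'_i$ and $f_x\equiv 0$ on $E'_i\sm E_i$) to force the exit edge into $E_i\cap E'_i$ and to rule out every possible entry edge except $(s,i)$. You instead first prove $\Delta((s,i))>0$ arithmetically, delete $(s,i)$, decompose the resulting $b$-flow (supply at $i$, demand at $s$) into $i\to s$ paths plus cycles, use a counting argument at $i$ to show not all path flow can escape through $E'_i\sm E_i$, and then reattach $(s,i)$ and scale by $\lambda=\min_{e\in\tilde C}|\Delta(e)|$ to build $C$ explicitly. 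Your route is more constructive and makes explicit two points the paper leaves implicit (that $\Delta((s,i))>0$, and that $\{C,\Delta-C\}$ is itself a conformal decomposition for the chosen $\lambda$), at the cost of extra machinery (edge deletion and the $b$-flow decomposition); the paper's route reaches the same cycle in a few lines by letting the conformal cycle decomposition of $\Delta$ do the case analysis. Both arguments are sound.
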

\begin{proof}
  Let $\set{C^1,\ldots,C^k}$ be the conformal decomposition of $\Delta$ into cycles as in theorem \ref{conformal-decomposition-cycles}.  It suffices to show that some $C^j$ satisfies conditions (1) and (2).

  Since $x'$ sends more flow on edges in $E_i$ than $x$, some $C^j$ enters $i$ through an edge $e' \notin E_i$, and exits through an edge $e \in E_i$. By conformality and the fact that $x'$ sends no flow on $E_i \sm E'_i$, we know that $e \in E_i \intersect E'_i$.  Moreover, by conformality and the fact that $x$ sends no flow on edges in $E'_i \sm E_i$, it is easy to see that $e' \notin E'_i \sm E$.  Therefore, the only remaining possibility is that $e' = (s,i)$.
  \end{proof}
  \noindent This yields truthfulness of the algorithm.
\begin{lemma}
  Algorithm \,\ref{alg:mkpfrac} is a truthful fractional mechanism
\end{lemma}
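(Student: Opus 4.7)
The plan is to derive a contradiction from the assumption that player $i$ strictly benefits from misreporting, by using Lemma~\ref{decompose-difference} together with a perturbation argument. The lemma furnishes a flow cycle $C$ in the conformal decomposition of $\Delta = f_{x'} - f_x$ that sends positive flow on $(s,i)$ and on some edge $e \in E_i \intersect E'_i$. I will use $C$ to build a feasible circulation in either $G[E]$ or $G[E']$ that contradicts either optimality or the lexicographic tie-breaking of Algorithm~\ref{alg:mkpfrac}.

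The first, and technically most delicate, step is to show that every edge in the support of $C$ lies in both $G[E]$ and $G[E']$. Source-side edges $(s,k)$, sink-side edges $(j,t)$, and $(t,s)$ are always present. For an edge $(k,j)$ with $k \neq i$ that is traversed by $C$, conformality forces $C$ and $\Delta$ to agree in sign there, so at least one of $f_x((k,j))$ and $f_{x'}((k,j))$ is positive; since the other agents have not deviated, $E_k = E'_k$, and hence $(k,j) \in E \intersect E'$. For edges incident to $i$, I will invoke the fact that conformal decompositions may be chosen to consist of simple cycles: $C$ enters $i$ exactly once and exits exactly once, and by the lemma these two edges are $(s,i)$ and $e \in E_i \intersect E'_i$. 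Together with flow conservation, this ensures that for all sufficiently small $\epsilon > 0$ the circulations $f_x + \epsilon C$ and $f_{x'} - \epsilon C$ (where $C$ is interpreted as a signed flow matching the sign of $\Delta$ on each edge) are feasible for $G[E]$ and $G[E']$ respectively.

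I will then case-split on the weight $w(C)$, noting that only the source edges $(s,k)$ carry nonzero weight $v_k/s_k$. If $w(C) > 0$, then $f_x + \epsilon C$ is a strictly heavier feasible circulation on $G[E]$, contradicting optimality of $f_x$. If $w(C) < 0$, then symmetrically $f_{x'} - \epsilon C$ contradicts optimality of $f_{x'}$. If $w(C) = 0$, let $e^*$ be the $\prec$-minimal edge in the support of $C$. Since $C$ vanishes on all edges $\prec$-below $e^*$, the perturbed circulation agrees with the original on every such coordinate; depending on whether $C(e^*) > 0$ or $C(e^*) < 0$, either $f_{x'} - \epsilon C \prec f_{x'}$ or $f_x + \epsilon C \prec f_x$ while remaining optimal and feasible, contradicting the tie-breaking rule that selected $f_{x'}$ or $f_x$ respectively. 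In every case the assumption that $i$ strictly benefits yields a contradiction.

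The main obstacle is the second paragraph: carefully establishing that every edge of $C$ lies in $G[E] \intersect G[E']$. This requires conformality of the decomposition at non-$i$ vertices together with simplicity of the cycle at $i$, where the two reports genuinely disagree. Once this containment is secured, the weight case analysis and the $w(C) = 0$ tie-breaking subcase are routine applications of optimality and of the $\prec$-minimality of the algorithm's selected circulation.
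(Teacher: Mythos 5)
Your proof is correct and follows essentially the same route as the paper's: invoke Lemma~\ref{decompose-difference}, observe that the cycle $C$ is supported only on edges present in both $G[E]$ and $G[E']$ (the paper asserts this via simplicity of the cycle, just as you argue), and then contradict optimality when $w(C)\neq 0$ and the lexicographic tie-breaking when $w(C)=0$. The only cosmetic difference is that you perturb by $\epsilon C$ for small $\epsilon>0$, whereas the paper adds or subtracts the entire cycle, with feasibility guaranteed by Lemma~\ref{lem:feasibility_maintained}.
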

\begin{proof}
   Fix an instance $(I,E)$, and assume for a contradiction that a player $i$ with true edges $E_i$ benefits by reporting $E'_i$ instead. Let $x$ and $x'$ be the assignments computed by the algorithm on reports $E$ and $E' = E_{-i} \union E'_i$. By assumption, $\sum_{e \in E_i} f_{x'}(e) > \sum_{e \in E_i}f_x(e)$. Let $\Delta$ and $C$ be as in Lemma \ref{decompose-difference}. Observe that $C$ does not send flow on any edges in the symmetric difference of $E_i$ and $E'_i$. Therefore, by Lemma \ref{lem:feasibility_maintained} $f_x+C$ is a feasible circulation in $G[E]$, and moreoever $f_{x'}-C$ is a feasible circulation in $G[E']$.  If the weight $w(C)$ of circulation $C$ is non-zero, then one of $f_x$ or $f_{x'}$ is non-optimal. Therefore, $w(C)=0$. Now notice that, by definition of $\prec$, either $f_x+C \prec f_x$ or $f_{x'} - C \prec f_{x'}$. Therefore, one of $f_{x}$ or $f_{x'}$ is not a $\prec$-minimal optimal solution, yielding the contradiction.
\end{proof}

\noindent Combining with Corollary \,\ref{cor:gap_fractional_suffices}, we get the following theorem.
\begin{theorem}
  There is a polynomial-time, $2$-approximate, without-money mechanism for the multiple knapsack problem that is truthful-in-expectation in the private graph model.
\end{theorem}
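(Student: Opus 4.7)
The plan is to combine the previously established results directly. The preceding lemma shows that Algorithm \ref{alg:mkpfrac} is a truthful fractional mechanism without money for \mkp[E] LP that returns an optimal fractional solution. Thus it is a $\beta$-approximation with $\beta = 1$. Corollary \ref{cor:gap_fractional_suffices} then immediately converts this into a truthful-in-expectation mechanism without money for the integral \mkp\ problem with approximation ratio $2 \beta = 2$, since the integrality gap of (\gap\ LP) is $2$. This gives the claimed approximation guarantee and the truthfulness property.

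What remains is to argue polynomial-time implementability. I would verify this in two parts. First, Algorithm \ref{alg:mkpfrac} itself runs in polynomial time: computing the $\prec$-minimal optimal solution to (\mkp[E] LP) can be done by fixing variables one at a time in the order induced by $\prec$, using a polynomial number of LP solves, in the same spirit as the procedure used in the proof of Proposition \ref{prop:mm} for maximum bipartite matching. Second, the Lavi--Swamy reduction behind Corollary \ref{cor:gap_fractional_suffices}, as recalled in Section \ref{sec:fractional_reduction}, converts the fractional solution into a distribution over integral assignments using the polynomial-time rounding of Shmoys and Tardos \cite{ST,CK}, and a polynomial-time decomposition into a convex combination of integral points scaled by $\frac{1}{2}$. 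Both components run in polynomial time, so the overall mechanism does as well.

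I don't expect any real obstacles here: the statement is essentially a bookkeeping combination of Corollary \ref{cor:gap_fractional_suffices} with the previous lemma, plus the observation that lexicographic tiebreaking admits an efficient implementation analogous to the matching case. The one subtlety worth spelling out is that although the output of the composite mechanism is a random integral assignment, the fractional truthfulness of Algorithm \ref{alg:mkpfrac} is preserved under the Lavi--Swamy scaling-and-rounding step because the induced payment scheme remains identically zero, so truthfulness transfers to truthfulness-in-expectation without money in the private graph model.
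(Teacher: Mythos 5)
Your proposal is correct and follows the paper's own route exactly: the paper obtains this theorem by combining the fractional truthfulness lemma for Algorithm \ref{alg:mkpfrac} (which is optimal, so $\beta=1$) with Corollary \ref{cor:gap_fractional_suffices}, and it likewise notes that the algorithm runs in polynomial time via a sequence of LP solves implementing the lexicographic tiebreaking. Your added remarks on the polynomial-time Lavi--Swamy decomposition and the zero payment scheme are consistent with the paper's Section \ref{sec:fractional_reduction} and Appendix \ref{app:laviswamy_recap}.
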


\subsection{Size-Invariant \gap}
\label{sec:sigap}

We consider the size-invariant generalized assignment problem on a private bipartite graph $E$. 
Since \sigap\ generalizes \mwbm, by Theorem \,\ref{thm:mwm_lb} no deterministic truthful approximation can achieve better than a factor $2$ approximation, and moreover no truthful in expectation PTAS is possible. In this section, we devise a 4-approximate without-money mechanism for \sigap\ that is truthful-in-expectation in the private graph model. Even though solving \sigap[E] LP is not fractionally truthful (again, by Theorem \ref{thm:mwm_lb}), we show that a simple greedy algorithm is fractionally truthful and yields a 2-approximate solution the LP. Combining this with Corollary \,\ref{cor:gap_fractional_suffices}, we get a $4$-approximate, without-money mechanism for \sigap\ that is truthful in expectation in the private graph model.  Consider the following algorithm.

\begin{algorithm}
\caption{Fractional Mechanism for \sigap\ on Private Bipartite Graph}
\label{alg:sigapfrac}
\begin{algorithmic}[1]
  \REQUIRE Public instance of \sigap, and solicited private edges $E$.
   \ENSURE A feasible solution $x$ for \sigap[E] LP
   \STATE Order  $[n] \times [m]$ in decreasing order 
     of value density $d_e$, where $d_{(i,j)}
     =\frac{v_{ij}}{s_i}$, breaking ties arbitrarily.
  \FORALL{ $(i,j) \in E$, in the order defined above}
   \STATE Fractionally assign as much of job $i$ on machine $j$, until
      the job is exhausted or the machine is full.
    \ENDFOR
  \RETURN the resulting assignment $x$.

\end{algorithmic}
\end{algorithm}

First, we bound the approximation factor of Algorithm \,\ref{alg:sigapfrac}.
\begin{lemma}\label{lem:sigap_approx}
  Algorithm \,\ref{alg:sigapfrac} returns a $2$-approximate solution
  to \sigap[E] LP.
\end{lemma}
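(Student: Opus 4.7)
The plan is to prove the $2$-approximation by LP duality (dual-fitting): I would exhibit a feasible dual solution to the \sigap[E] LP whose objective is at most $2w(\text{ALG})$, where $w(\text{ALG}) = \sum_{ij} v_{ij} x_{ij}$ is the value returned by Algorithm \ref{alg:sigapfrac}. The dual has nonnegative variables $\alpha_i$ (one per job, from $\sum_j x_{ij}\leq 1$) and $\beta_j$ (one per machine, from $\sum_i s_i x_{ij}\leq c_j$), constraints $\alpha_i + s_i \beta_j \geq v_{ij}$ for each $(i,j) \in E$, and objective $\min\,\sum_i \alpha_i + \sum_j c_j \beta_j$.

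I would read off the dual from greedy's execution. For each job $i$, let $T_i$ be the density $d_{(i,j_*)} = v_{ij_*}/s_i$ of the edge $(i,j_*)$ at which greedy saturates job $i$, or $T_i = 0$ if job $i$ is never exhausted. Define $T_j$ analogously per machine. Then set $\alpha_i = s_i T_i$ and $\beta_j = T_j$.

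The heart of the argument is the feasibility claim $\max(T_i, T_j) \geq d_{ij}$ for every $(i,j) \in E$, which yields $\alpha_i + s_i \beta_j = s_i(T_i + T_j) \geq s_i d_{ij} = v_{ij}$. I would prove this by case analysis at the moment greedy processes edge $(i,j)$: if either job $i$ or machine $j$ was already tight at this point, then whichever edge saturated it was processed earlier and hence has density at least $d_{ij}$, giving $T_i \geq d_{ij}$ or $T_j \geq d_{ij}$; otherwise both sides have remaining capacity, so greedy assigns a positive amount ``as much as possible'' to $x_{ij}$, which by definition must itself saturate at least one of the two constraints, yielding $T_i = d_{ij}$ or $T_j = d_{ij}$.

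For the dual cost, whenever $T_i > 0$ greedy has $\sum_j x_{ij} = 1$ supported on edges with $v_{ij} \geq s_i T_i$ (those edges being processed before the tightening edge), so the job's greedy contribution $L_i := \sum_j v_{ij} x_{ij}$ satisfies $L_i \geq s_i T_i$; summing yields $\sum_i \alpha_i \leq \sum_i L_i = w(\text{ALG})$. An analogous argument on the machine side gives $\sum_j c_j \beta_j \leq \sum_j M_j = w(\text{ALG})$, where $M_j := \sum_i v_{ij} x_{ij}$, using that $\sum_i s_i x_{ij} = c_j$ whenever machine $j$ is saturated. Weak LP duality then gives $w(\text{OPT LP}) \leq 2 w(\text{ALG})$. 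The main obstacle is the feasibility case analysis, which hinges on carefully leveraging the decreasing-density processing order to show that any ``prior'' tightening edge has density at least $d_{ij}$.
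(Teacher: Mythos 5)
Your proposal is correct and matches the paper's proof in essence: the dual variables you define ($\alpha_i = s_iT_i$, $\beta_j = T_j$) coincide with the paper's $u_i$ and $z_j$ set at the moment a job or machine is exhausted, the feasibility case analysis exploits the decreasing-density processing order exactly as the paper's Observation on edge order does, and the bound $\sum_i \alpha_i + \sum_j c_j\beta_j \leq 2\,w(\text{ALG})$ followed by weak duality is the same charging. The only cosmetic quibble is that a job's assigned edges are processed before \emph{or at} its tightening edge, which still gives $v_{ij}\geq s_iT_i$, so nothing is affected.
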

\begin{proof}
  This can be shown by a charging argument, best formalized by constructing a feasible solution to a dual of \sigap[E] LP of value at most twice the value attained by the algorithm. This dual is shown below, and has decision variables $u \in \RR^n$ and $z \in \RR^m$:

\begin{lp}
 \tag{\sigap[E] LPD}
 \mini{\sum_{i=1}^n u_i + \sum_{j=1}^m c_j z_j}
\st
\qcon{u_i + s_i z_j \geq v_{ij}}{(i,j) \in E}
\qcon{u_i \geq 0}{i=1,\ldots,n}
\qcon{z_j \geq 0}{j=1,\ldots,m}
\end{lp}
 First, we make a simple observation about the algorithm that will be useful in the proof.
  \begin{observation}\label{obs:edgeorder}
    For a job $i$, edges incident on $i$ are examined in decreasing order of $v_{ij}$ (since size $s_i$ is independent of the machine). For a machine $j$, edges incident on $j$ are examined in decreasing order of $v_{ij} / s_i $.
  \end{observation}
We now construct the dual solution $u,z$ in parallel with the execution of the algorithm as follows. Begin with $u=0$ and $z=0$. Consider the iteration of Algorithm \,\ref{alg:sigapfrac} corresponding to edge $e=(i,j)$. If job $i$ is exhausted on this iteration, set $u_i = v_{ij}$. If the capacity on machine $j$ is exhausted on this iteration, set $z_j = v_{ij} / s_i$.  Notice that, in both cases, this satisfies the dual constraint corresponding to edge $(i,j)$. If no assignment is made on this iteration -- i.e. either $i$ or $j$ was exhausted in a previous iteration -- then we do not update the dual variables. Indeed, there is no need to do so for feasibility: By Observation \ref{obs:edgeorder}, if $i$ is already exhausted then already $u_i \geq v_{ij}$, and if $j$ is already exhausted then already $z_j \geq v_{ij} / s_i$, and either suffices to satisfy the dual constraint for edge $e$.

It remains to bound the value of the dual solution as compared to the primal solution. First, we write twice the value of the primal in a convenient form:
\[   2 \sum_{i,j} v_{ij} x_{ij} = \sum_i \sum_j v_{ij} x_{ij} + \sum_j
\sum_i \frac{v_{ij}}{s_i} (s_i x_{ij})\]

Observe that, by  Observation \ref{obs:edgeorder}, $u_i$ lower bounds the value of any edge on which any part of job $i$ is assigned, and $z_j$ lower-bounds the density of any job assigned to machine $j$. Moreover, $u_i$ is non-zero only if $i$ is fully assigned, and $z_j$ is non-zero only if $j$ is full. Therefore, we get
\begin{align*}
\sum_i \sum_j v_{ij} x_{ij} + \sum_j \sum_i \frac{v_{ij}}{s_i} (s_i
x_{ij}) &\geq \sum_i u_i \sum_j x_{ij} + \sum_j
z_j \sum_i (s_i x_{ij}) \\
&= \sum_i u_i + \sum_j z_j c_j
\end{align*}

The final term is precisely the value of the dual. Invoking weak LP duality completes the proof.
\end{proof}

It remains to show that Algorithm \,\ref{alg:sigapfrac} is fractionally truthful. We begin with an observation.

\begin{observation}\label{obs:lexicographically_maximal}
  Let $e_1,\ldots,e_{nm}$ be the ordering $[n] \times [m]$ in decreasing order of density $v_{ij} / s_i$.  Let $\prec$ denote the lexicographic ordering on $\RR^{[n] \times [m]}$.  Algorithm \,\ref{alg:sigapfrac} returns the $\prec$-maximal feasible solution of \sigap[E] LP.
\end{observation}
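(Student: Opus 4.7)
The plan is to argue by contradiction via a greedy exchange argument. Suppose for contradiction that there is a feasible solution $y$ of (\sigap[E] LP) with $y \succ x$, where $x$ is the output of Algorithm \ref{alg:sigapfrac}. Let $k$ be the smallest index such that $y_{e_k} \neq x_{e_k}$; since $y \succ x$ lexicographically and they agree on $e_1,\ldots,e_{k-1}$, we must have $y_{e_k} > x_{e_k}$. The goal is to show this is impossible.

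First I would handle the trivial case $e_k \notin E$: in this case $x_{e_k}=0$ because the algorithm only processes edges in $E$, and $y_{e_k}=0$ by feasibility of (\sigap[E] LP). Hence the differing position must satisfy $e_k \in E$. Writing $e_k=(i,j)$, I would then define the residual quantities at the moment the algorithm processes $e_k$:
\begin{equation*}
R^i_k \;=\; 1 - \sum_{k' < k,\ e_{k'} \text{ incident on } i} x_{e_{k'}}, \qquad
R^j_k \;=\; c_j - \sum_{k' < k,\ e_{k'} \text{ incident on } j} s_i\, x_{e_{k'}}.
\end{equation*}
By construction of the algorithm, $x_{e_k} = \min\bigl(R^i_k,\ R^j_k/s_i\bigr)$.

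The main step, and the only non-routine one, is to use the global LP constraints on $y$ (not just on the prefix) to bound $y_{e_k}$ from above by the same quantity. Since $y$ agrees with $x$ on $e_1,\ldots,e_{k-1}$ and since $y_{e_{k'}} \geq 0$ for all $k' > k$, the job constraint $\sum_{j'} y_{i j'} \leq 1$ gives $y_{e_k} \leq R^i_k$, and the machine constraint $\sum_{i'} s_{i'} y_{i' j} \leq c_j$ (using $s_{i'}$ independent of $j$ in \sigap) gives $y_{e_k} \leq R^j_k/s_i$. Hence $y_{e_k} \leq \min(R^i_k, R^j_k/s_i) = x_{e_k}$, contradicting $y_{e_k} > x_{e_k}$.

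I anticipate the main potential subtlety lies in ruling out that $y$ could ``compensate'' for a larger value at $e_k$ by decreasing values on later edges; the argument above handles this because the feasibility constraints $\sum_{j'} y_{ij'}\leq 1$ and $\sum_{i'} s_{i'} y_{i'j}\leq c_j$ sum over \emph{all} edges, and the nonnegativity of the later coordinates already forces the upper bound at position $k$. Together with the trivial case $e_k \notin E$, this contradiction establishes that $x$ is $\prec$-maximal among feasible solutions.
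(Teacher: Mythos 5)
Your proposal is correct: the paper states this as an observation without an explicit proof, and your prefix-exchange argument (any feasible $y$ agreeing with $x$ on $e_1,\ldots,e_{k-1}$ satisfies $y_{e_k} \le \min(R^i_k, R^j_k/s_i) = x_{e_k}$ by the job and machine constraints together with nonnegativity of the later coordinates, plus the trivial case $e_k \notin E$) is exactly the intended justification of the greedy algorithm's lexicographic maximality. The only blemish is a notational slip in your definition of $R^j_k$, where the summand should be $s_{i'}\,x_{e_{k'}}$ for $e_{k'}=(i',j)$ rather than $s_i\,x_{e_{k'}}$; your subsequent invocation of the machine constraint $\sum_{i'} s_{i'} y_{i'j} \le c_j$ already uses the correct sizes, so this is cosmetic.
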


\begin{figure}
\centering
\figeps[1]{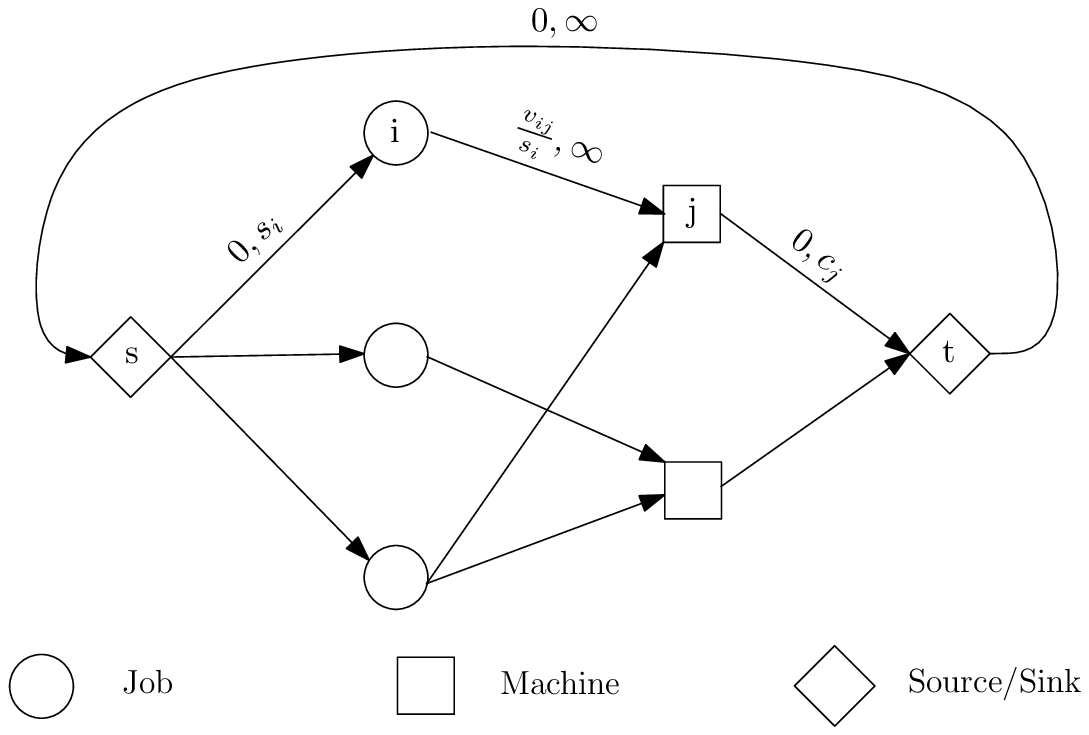}
\caption{Flow Interpretation of the Size-invariant Generalized
  Assignment Problem}
Edges are labeled with (weight,capacity) pairs.
\label{fig:sigap_flow}
\end{figure}

As in section \ref{sec:mkp}, we use a network flow interpretation. We define graph $G[E]$ for  $E \sse [n] \times [m]$. This construction is similar to that of \S \ref{sec:mkp}, and the graph is shown in Figure \ref{fig:sigap_flow}. 
As in \S \ref{sec:mkp}, feasible assignments for \sigap[E] LP are in one-to-one correspondence with feasible circulations in $G[E]$. We can similarly define an order $\prec$ on flows in $G[[n] \times [m]]$ by setting $f_x \prec f_{y}$ when $x \prec y$. Therefore, we can interpret Algorithm \,\ref{alg:sigapfrac} as finding the $\prec$-maximal feasible circulation in $G[E]$. Next, we establish a decomposition lemma similar to, yet more involved than, Lemma \ref{decompose-difference}.

\begin{lemma}\label{decompose-difference-sigap}
  Let $x$ be the output assignment of Algorithm \,\ref{alg:sigapfrac} with declarations $E$, and let $x'$ be the output assignment with declarations $E' = E_{-i} \union E'_i$. Let circulation $\Delta$ be the difference between $f_{x'}$ and $f_x$; i.e. $\Delta=f_{x'}-f_x$. If $\sum_{e \in E_i} w_e f_{x'}(e) > \sum_{e \in E_i} w_e f_x(e)$ then $\Delta$ can be conformally decomposed into $\set{C,\Delta - C}$ where:
  \begin{enumerate}
  \item $C$ is a flow cycle \label{cond:is_cycle}
  \item $C \prec 0$. In other words, adding $C$ to any other
    circulation worsens the lexicographic order.\label{cond:is_worsening}
  \item $C$ enters $i$ through some $e \in E_i \sm E'_i$, with
    $C(e) <0$. \label{cond:enters}
  \item $C$ exits $i$ through some $e' \in E_i \intersect E'_i$,
    with $C(e') > 0$. \label{cond:exits}
  \item $w_{e'} > w_e$. \label{cond:weights}
  \end{enumerate}
\end{lemma}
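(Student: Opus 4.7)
The plan is to mirror the conformal-cycle argument of Lemma~\ref{decompose-difference}, but now with the bookkeeping carried out in terms of \emph{weighted} gain on $E_i$ rather than raw flow, and to strengthen it by invoking the $\prec$-maximality property (Observation~\ref{obs:lexicographically_maximal}) on \emph{both} $f_x$ and $f_{x'}$. I would first apply Theorem~\ref{conformal-decomposition-cycles} to write $\Delta = f_{x'}-f_x$ as a conformal sum $\sum_k C^k$ of elementary flow cycles, and observe that the hypothesis expands to $\sum_k \sum_{e \in E_i} w_e C^k(e) > 0$, so at least one cycle $C := C^k$ has strictly positive weighted contribution on $E_i$; I claim this $C$ satisfies conditions~(\ref{cond:is_cycle})--(\ref{cond:weights}). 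As an elementary cycle, $C$ visits $i$ using exactly two edges; call the entry edge $e$ and the exit edge $e'$, each of which is either the source edge $(s,i)$ or an edge of the form $(i,j)$. Conformality forces any $(i,j)$-edge of $C$ with $C<0$ to lie in $E$, any with $C>0$ to lie in $E'$, and, since only job $i$'s incidence differs between $E$ and $E'$, every non-$i$ edge of $C$ to lie in $E \cap E'$.

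The core of the proof is a case analysis of the pair $(e, e')$. The weighted contribution of $C$ equals $\epsilon w_{e'}$ if $e' \in E_i$ (and $0$ otherwise) minus $\epsilon w_e$ if $e \in E_i$ (and $0$ otherwise), where $\epsilon > 0$ is the flow value of $C$. Several sub-cases give a non-positive contribution, contradicting the choice of $C$: these are $e' = (s,i)$; $e = (s,i)$ with $e' \in E'_i \setminus E_i$; and both $e, e'$ of type $(i,j)$ with $e' \in E'_i \setminus E_i$. The two remaining scenarios with potentially positive contribution are (i) $e = (s,i)$ and $e' \in E_i \cap E'_i$, and (ii) $e, e' \in E_i \cap E'_i$ (both of type $(i,j)$). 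In each of these, $C$ is supported entirely on $E \cap E'$, so $f_x + C$ is a feasible circulation in $G[E]$ and $f_{x'} - C$ is a feasible circulation in $G[E']$; the $\prec$-maximality of $f_x$ and $f_{x'}$ then forces $C \prec 0$ and $C \succ 0$ simultaneously, a contradiction.

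The only surviving configuration is $e = (i,j_1) \in E_i \setminus E'_i$ and $e' = (i,j_2) \in E_i \cap E'_i$, giving conditions~(\ref{cond:is_cycle}), (\ref{cond:enters}), (\ref{cond:exits}); positivity of the now-explicit contribution $\epsilon(w_{e'} - w_e)$ then delivers condition~(\ref{cond:weights}). Finally, for condition~(\ref{cond:is_worsening}), in this surviving configuration $e, e' \in E$ and every other edge of $C$ lies in $E \cap E'$, so $f_x + C$ is feasible in $G[E]$; $\prec$-maximality of $f_x$ gives $f_x + C \preceq f_x$, i.e., $C \preceq 0$, and since $C \neq 0$ this is strict. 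I expect the main technical challenge to be organizing the several sub-cases cleanly without conflating cycle-orientation with the natural direction of edges in $G[E]$; the conceptual point is that a positively-contributing cycle must reroute $i$'s assignment from a lower-density truthful edge to a higher-density truthful edge precisely by dropping the former, and $\prec$-maximality on both sides kills every other option.
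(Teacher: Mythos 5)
Your proof is correct and follows essentially the same route as the paper's: take a conformal cycle decomposition of $\Delta$, select a cycle with strictly positive weighted contribution on $E_i$, use the $\prec$-maximality of \emph{both} $f_x$ and $f_{x'}$ (Observation \ref{obs:lexicographically_maximal}) to rule out the cycle entering $i$ through an edge present in both $G[E]$ and $G[E']$, and then read off conditions (\ref{cond:enters})--(\ref{cond:weights}) from positivity and condition (\ref{cond:is_worsening}) from $\prec$-maximality of $f_x$ alone. The only difference is that you spell out the case analysis of the entry/exit pair at $i$ more explicitly than the paper does, which is fine.
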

\begin{proof}
  Consider the conformal decomposition of $\Delta$ into cycles $C^1,\ldots,C^k$. It suffices to show that some $C^j$ satisfies the conditions above. By assumption, there is a cycle $C$ in the decomposition with $\sum_{e \in E_i} w_e C(e) > 0$. By conformality, cycle $C$ must exit $i$ through an edge $e' \in E_i \intersect E'_i$. Moreover $C$ cannot enter through an edge in both graphs $G[E]$ and $G[E']$: if it did, then both $f_{x} + C$ and $f_{x'} -C$ are feasible in $G[E]$ and $G[E']$ respectively, and thus one of them is not lexigographically maximal, contradicting observation \ref{obs:lexicographically_maximal}. Therefore, by conformality, $C$ must remove flow from an edge $e \in E_i \sm E'_i$, and add flow to $e' \in E_i \intersect E'_i$, with $w_{e'} > w_{e}$. Thus, $C$ satisfies conditions \ref{cond:is_cycle}, \ref{cond:enters}, \ref{cond:exits}, and \ref{cond:weights}.

  It remains to establish condition \ref{cond:is_worsening}. Observe that $f_x + C$ is a feasible circulation on $G[E]$. Since $f_x$ is a lexicographically maximal feasible circulation on $G[E]$, we deduce $C \prec 0$.
\end{proof}
We are now ready to show truthfulness.

\begin{lemma}
  Algorithm \,\ref{alg:sigapfrac} is a truthful fractional mechanism for \sigap.
\end{lemma}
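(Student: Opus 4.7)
My plan is to argue by contradiction. Assume some job $i$ strictly benefits by misreporting $E_i$ as $E_i'$, i.e., $\sum_{e \in E_i} w_e f_{x'}(e) > \sum_{e \in E_i} w_e f_x(e)$, and apply Lemma \ref{decompose-difference-sigap} to extract a cycle $C$ in the conformal decomposition of $\Delta = f_{x'} - f_x$ satisfying the five listed properties. The goal is to use $C \prec 0$ (condition \ref{cond:is_worsening}) together with conditions \ref{cond:enters}, \ref{cond:exits}, and \ref{cond:weights} to contradict the $\prec$-maximality of $f_{x'}$ on $G[E']$ guaranteed by Observation \ref{obs:lexicographically_maximal}.

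The key intermediate step is to inspect the $\prec$-first bipartite edge $\hat e$ on which $C$ is nonzero. Condition \ref{cond:is_worsening} forces $C(\hat e) < 0$. I can rule out $\hat e$ being incident on $i$: the case $\hat e = e'$ is impossible since $C(e')>0$ contradicts $C(\hat e)<0$; the case $\hat e = e$ is impossible because $\hat e$ being $\prec$-first among $C$'s nonzero edges (combined with $C(e')\ne 0$) forces $w_{\hat e} \geq w_{e'} > w_e$; and no other bipartite edge of $C$ is incident on $i$, since $C$ is a simple cycle whose unique entry at $i$ is $e$ and whose unique exit at $i$ is $e'$. Hence $\hat e = (k, j^{*})$ for some $k \neq i$. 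Because $k \neq i$ and $E_{-i} = E_{-i}'$, Algorithm \ref{alg:sigapfrac} processes $\hat e$ at the identical iteration in both runs, assigning $\min(\mathrm{rem}(k), \mathrm{rem}(j^*))$; since $k$'s state evolves identically under $E$ and $E'$, the inequality $f_x(\hat e) > f_{x'}(\hat e)$ forced by $C(\hat e)<0$ must come from $\mathrm{rem}(j^*)$ being smaller under the lie at that iteration. A case check on the two possible sides of the symmetric difference shows this can only happen if there is an earlier sym-diff edge of the form $(i, j^*) \in E_i' \setminus E_i$ whose density strictly exceeds $w_{\hat e}$.

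To close, the plan is to convert this structural consequence into a modified cycle $\tilde C$ that lives entirely in $G[E']$ and satisfies $\tilde C \succ 0$, obtained by splicing out the reverse-$e$ segment of $C$ at $i$ and rerouting inflow to $i$ through the source edge $(s,i)$, with a compensating local reroute at $j$ via $(j,t)$ and $(t,s)$. Showing that $f_{x'} + \epsilon \tilde C$ is feasible in $G[E']$ for some $\epsilon>0$ then contradicts Observation \ref{obs:lexicographically_maximal} applied to $f_{x'}$. The main obstacle I expect is establishing this residual feasibility: I need to translate the fact that $f_x + C$ is feasible in $G[E]$ (from the conformal decomposition) into a slack statement about $f_{x'}$ on the rerouted edges $(s,i)$ and $(j, t)$. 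I anticipate this follows from careful bookkeeping with the conformal-decomposition machinery of Appendix \ref{app:flow}, but the argument becomes delicate for cycles longer than the canonical six-cycle through $\{s, k, j, i, j', t\}$, where multiple intermediate jobs and machines must be handled simultaneously.
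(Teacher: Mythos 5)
Your endgame is the paper's proof, but the middle of your argument has a genuine gap, and it is precisely the part you do not need. You claim that because $k \neq i$ and $E_{-i} = E'_{-i}$, job $k$'s state ``evolves identically'' in the two greedy runs up to the iteration processing $\hat e$, so the discrepancy $f_x(\hat e) > f_{x'}(\hat e)$ must be traced to $\mathrm{rem}(j^*)$, and from there to a direct edge $(i,j^*) \in E'_i \sm E_i$ of higher density. Neither step is justified. The edge $\hat e$ is only the $\prec$-first nonzero edge of the single cycle $C$, not of the whole difference $\Delta = f_{x'} - f_x$; other cycles in the conformal decomposition may be nonzero on strictly higher-density edges, so the two executions can already have diverged (in machine fill levels and in other jobs' remaining sizes) before $\hat e$ is reached. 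Moreover, even granting a discrepancy at $j^*$, the displacement can propagate through a chain of jobs and machines none of which touches $i$ directly, so the conclusion that some sym-diff edge of the specific form $(i,j^*)$ exists does not follow. As written, this cascade analysis of the greedy dynamics would not go through.

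The good news is that the paper's proof skips all of this: it goes directly from Lemma \ref{decompose-difference-sigap} to the splicing you describe at the end. Since $f_{x'} - C$ is a feasible circulation by Lemma \ref{lem:feasibility_maintained} and its only flow outside $G[E']$ is the amount $-C(e) > 0$ on $e = (i,j)$ (the cycle $C$ meets $i$ only in $e$ and $e'$), one subtracts the single flow cycle $D$ through $s,i,j,t$ carrying $-C(e)$, which zeroes out $e$; nonnegativity of $f_{x'} - C - D$ on $(s,i)$ and $(j,t)$ is immediate from flow conservation at $i$ and $j$ (each job has only the source edge entering it, each machine only the sink edge leaving it), so no bookkeeping over long cycles is needed — your worry about ``residual feasibility'' being delicate is unfounded. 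Finally, $-C - D \succ 0$ follows from condition \ref{cond:weights} alone: since $-C(e') \neq 0$ and $w_{e'} > w_e$, the edge $e$ is not the $\prec$-first nonzero edge of $-C$, so zeroing it does not change the sign of the leading nonzero entry. Then $f_{x'} - C - D$ is feasible in $G[E']$ and $\succ f_{x'}$, contradicting Observation \ref{obs:lexicographically_maximal}. So: delete the algorithm-dynamics interlude, and your construction of $\tilde C$ (which is exactly $-C - D$, up to the positive scaling $\epsilon$, which is harmless) closes the proof as in the paper.
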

\begin{proof}

  Assume, for a contradiction, that a player $i$ benefits by reporting $E'_i$ instead of his true edges $E_i$. Let $x$ and $x'$ be the assignments computed by the algorithm on reports $E$ and $E' = E_{-i} \union E'_i$. By assumption, $\sum_{e \in E_i} w_e f_{x'}(e) > \sum_{e \in E_i}w_e f_x(e)$. Let $\Delta$ and $C$ be as in Lemma \ref{decompose-difference-sigap}.

  Recall that $C \prec 0$. Thus, if $f_{x'} - C$ were feasible in $G[E']$ we would be done, as we would contradict Observation \ref{obs:lexicographically_maximal}. However, this is not the case, as $f_{x'} - C$ sends positive flow on edge $e \in E_i \sm E'_i$. We remedy this by simply zeroing out the flow on edge $e=(i,j)$, as follows: Let $D$ be the flow cycle through $s,i,j,t$ such that $f_{x'} -C -D$ sends no flow on $e$. It is clear that $f_{x'} - C - D$ is a feasible circulation on $G[E']$.  We claim that still $f_{x'} -C -D \succ f_{x'}$. To see this, notice that by condition \ref{cond:weights} of Lemma \ref{decompose-difference-sigap}, edge $e$ is not the greatest weight edge with non-zero flow in $-C$. Thus, since $-C \succ 0$, it is easy to see that also $-C - D \succ 0$. Therefore $f_{x'} -C - D \succ f_{x'}$, as needed.
\end{proof}
Combining with Corollary \,\ref{cor:gap_fractional_suffices}, we get
the theorem.

\begin{theorem}
  There is a polynomial-time, $4$ approximate, without-money mechanism for the size-invariant generalized assignment problem that is truthful-in-expectation in the private graph model.
\end{theorem}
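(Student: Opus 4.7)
The plan is to obtain this theorem essentially for free by composing the two preceding lemmas with Corollary \ref{cor:gap_fractional_suffices}. Specifically, Lemma \ref{lem:sigap_approx} established that Algorithm \ref{alg:sigapfrac} is a $2$-approximation for (\sigap[E] LP), and the lemma just before this theorem established that the same algorithm is a truthful fractional mechanism in the private graph model. So the invocation of Corollary \ref{cor:gap_fractional_suffices} with $\beta = 2$ (remembering that the integrality gap of (\gap\ LP) is $2$) immediately produces a truthful-in-expectation mechanism with approximation ratio $2 \cdot 2 = 4$.

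The only remaining concern is polynomial-time implementability. I would verify this in two steps. First, Algorithm \ref{alg:sigapfrac} itself is trivially polynomial-time: it sorts the $nm$ edges in decreasing order of density $v_{ij}/s_i$ and then performs a single greedy pass, assigning jobs fractionally until either a job is exhausted or a machine is full. Second, the Lavi--Swamy randomized rounding machinery invoked in Corollary \ref{cor:gap_fractional_suffices} is polynomial-time provided that the fractional solution lies in a scaled-down copy of the integral polytope and that one can sample from a distribution of integral solutions matching its marginals; both hold here because (\gap\ LP) has integrality gap $2$ with a polynomial-time rounding procedure \cite{ST,CK}, which is precisely the hypothesis of Lemma \ref{lem:laviswamy-containment}.

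I do not anticipate any real obstacle: the heavy lifting (the charging-against-dual approximation argument and the conformal flow decomposition establishing fractional truthfulness) has already been done. One minor subtlety to mention, exactly as flagged in the remark at the end of \S \ref{sec:fractional_reduction}, is that (\sigap[E] LP) is written with the private edges appearing explicitly as constraints $x_{ij}=0$ for $(i,j) \notin E$, which does not fit the Lavi--Swamy framework on the surface. I would handle this by simply re-interpreting the output of Algorithm \ref{alg:sigapfrac} as a feasible fractional solution to the public (\gap\ LP) obtained by zeroing out the $v_{ij}$ on non-reported edges; this fits the framework and suffices to apply the rounding.

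Putting these pieces together yields a polynomial-time, $4$-approximate, truthful-in-expectation mechanism without money for \sigap\ in the private graph model, as claimed.
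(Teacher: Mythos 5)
Your proposal matches the paper's own argument: the theorem is obtained exactly by combining Lemma \ref{lem:sigap_approx}, the fractional-truthfulness lemma for Algorithm \ref{alg:sigapfrac}, and Corollary \ref{cor:gap_fractional_suffices} with $\beta=2$ and integrality gap $2$, including the re-interpretation of the \sigap[E] solution as a feasible point of (\gap\ LP) noted at the end of \S\ref{sec:fractional_reduction}. The polynomial-time justification via the greedy pass and the constructive integrality-gap rounding of \cite{ST,CK} is also the same, so there is nothing to add.
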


\subsection{\vigap\ and \gap}
\label{sec:vigap_gap}
In this section, we overview the results that we obtain for \vigap\
and \gap, utilizing the techniques developed in \S
\ref{sec:fractional_reduction}-\ref{sec:sigap}. However, since these 
differ from our results so far only technically, we  defer details to the full version of the paper.

Consider the \vigap. We observe that Proposition \ref{prop:mkp_opt_is_truthful} holds essentially unchanged; that is, solving \vigap\ optimally gives a truthful mechanism in our model. Next, we observe that greedy Algorithm \,\ref{alg:sigapfrac}, when adapted to the fractional \vigap\ (namely, density $d_{(i,j)}$ is now defined as $v_{i}/s_{ij}$),  still provides a $2$-approximation by essentially the same analysis. A more involved inductive argument is needed to show that this fractional algorithm is truthful; We defer this technical, yet simple, proof to the full version of the paper. We get the following theorem.

\begin{theorem}\label{thm:vigap}
  There is a polynomial-time, $4$ approximate, without-money mechanism for the value-invariant generalized assignment problem that is truthful-in-expectation in the private graph model.
\end{theorem}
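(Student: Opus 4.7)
The plan is to apply Corollary \ref{cor:gap_fractional_suffices}: it suffices to exhibit a polynomial-time, $2$-approximate fractional mechanism for \vigap\ that is truthful without money, after which the reduction produces a $4$-approximate truthful-in-expectation mechanism for the integral problem. The mechanism is Algorithm \ref{alg:sigapfrac} with density redefined as $d_{(i,j)} = v_i/s_{ij}$; because $v_i$ does not depend on $j$, the effect is that $i$'s reported edges are processed in order of increasing $s_{ij}$, so the algorithm tries to assign each job on its most capacity-efficient machines first.

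The $2$-approximation follows by an essentially unchanged primal-dual charging argument. The dual of (\vigap[E] LP) has variables $u\in\RR^n$, $z\in\RR^m$ with the constraint $u_i + s_{ij} z_j \geq v_i$ for each $(i,j)\in E$. Building $u,z$ online, I would set $u_i = v_i$ when job $i$ is exhausted and $z_j = v_i/s_{ij}$ when machine $j$ is exhausted while processing edge $(i,j)$. The analogue of Observation \ref{obs:edgeorder} continues to hold -- edges incident to any fixed job $i$ are examined in decreasing $v_i/s_{ij}$, and edges incident to any fixed machine $j$ in decreasing $v_i/s_{ij}$ -- so the dual constraint is satisfied for every edge processed after the exhausting event, and the algebraic manipulation of Lemma \ref{lem:sigap_approx} goes through unchanged, bounding twice the primal value from below by the dual objective.

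The technical heart of the theorem is truthfulness, which is where the main obstacle lies. The flow-cycle decomposition of \S \ref{sec:sigap} does not transport: a fractional assignment $x_{ij}$ in \vigap\ consumes $s_{ij}x_{ij}$ units of machine-$j$ capacity yet only $x_{ij}$ units of job-$i$ capacity, and since $s_{ij}$ varies with $j$, there is no single scalar quantity simultaneously conserved at both endpoints of an $(i,j)$ edge. In place of the cycle argument I would proceed by induction along the fixed, public edge ordering $\sigma$, coupling the truthful execution on $E$ with a hypothetical lying execution on $E' = E_{-i} \union E'_i$ and comparing their states prefix by prefix. The invariant to maintain is that after any prefix of $\sigma$, the cumulative fraction of job $i$ assigned on edges of $E_i$ by the truthful execution dominates the cumulative fraction assigned on $E_i$ by the lying execution, together with an auxiliary comparison of residual capacities on machines incident to $i$. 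The hard step is the indirect interaction in which a divergence on some edge $e \in E_i \setminus E'_i$, processed only truthfully, leaves machine $j(e)$ with extra capacity in the lying execution; a third job $k$ may then consume that capacity, freeing up other machines on which $i$ could subsequently be assigned in the lying run. The case analysis rests on the algorithmic fact that $i$'s reported edges are examined in order of increasing $s_{ij}$, so any such reshuffling can only redirect $i$'s assignment from a smaller-$s_{ij}$ true edge to a larger-$s_{ij}$ edge (or to a fake edge in $E'_i\setminus E_i$ that yields no true value), never strictly increasing $\sum_{j:(i,j)\in E_i} x'_{ij}$; combining this edge-by-edge with the greedy invariant closes the induction.

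Combining this $2$-approximate truthful fractional mechanism with Corollary \ref{cor:gap_fractional_suffices}, which contributes an additional factor of $2$ from the integrality gap of (\gap\ LP), yields the claimed polynomial-time, $4$-approximate, truthful-in-expectation mechanism without money for \vigap\ in the private graph model.
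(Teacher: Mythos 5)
Your proposal takes essentially the same route as the paper: the same greedy Algorithm \ref{alg:sigapfrac} adapted with density $d_{(i,j)}=v_i/s_{ij}$, the same dual-fitting argument showing a $2$-approximation to (\vigap[E] LP), and the same reduction via Corollary \ref{cor:gap_fractional_suffices} to obtain the $4$-approximate truthful-in-expectation mechanism. Like the paper--which explicitly defers the ``more involved inductive argument'' for fractional truthfulness to its full version--you leave the key truthfulness step (the cascade/reshuffling case in your coupling induction) asserted rather than fully proven, but your diagnosis of why the SIGAP flow-cycle decomposition does not transfer and the shape of the prefix-by-prefix induction are consistent with the paper's stated plan.
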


We now turn to the \gap. First, we make an assumption -- to be removed later -- that the maximum value $v_{max}$ of an edge in $E$ is publicly known up-front. Under this assumption, we reduce designing a truthful mechanism for \gap\ to the truthful mechanism for \vigap\, with a loss of $O(\log n)$ in the approximation ratio. In particular, we randomly pick $v \in \{v_{\max},\frac{v_{\max}}{2}, \frac{v_{\max}}{4}, \ldots, \frac{v_{\max}}{O(n^2))}\}$, and define a new value $\hat{v}_{ij}$ for each item $i$ and bin $j$ as follows: If $v_{ij} \geq v$ then $\hat{v}_{ij}=v$, else $\hat{v}_{ij}=0$ (equivalently, we discard edge $(i,j)$). This gives an instance of \vigap\ that we can solve using the mechanism of Theorem \ref{thm:vigap}. It is easy to verify that, in expectation, this reduction results in a loss of at most $O(\log n)$ in the approximation ratio. However, to ensure truthfulness we need to guarantee that edge $(i,j)$ actually results in value exactly $\hat{v}_{ij}$ if chosen; we do so by positing up-front that, whenever job $i$ is assigned to machine $j$ by the subroutine that solves \vigap, we cancel $i$'s assignment with probability $1-\hat{v}_{ij}/v_{ij}$. It is now easy to see that, under our original assumption that $v_{\max}$ is known up-front, this mechanism is truthful-in-expectation and has an approximation ratio of $O(\log n)$.

We now remove the assumption that $v_{max}$ is public knowledge by appropriately incentivizing the job with the maximum value edge. In particular, after receiving the reported edges $E$, we flip a fair coin. If the coin turns up heads, we assign the job with the maximum value edge on that edge (\ie, to his favorite machine, with value $v_{\max}$), and leave all other jobs unassigned. If the coin turns up tails, we discard the job with the maximum value edge, and proceed with the algorithm described above using this value of $v_{max}$. It is easy to see that this is still an $O(\log n)$ approximation algorithm. Moreover, the job with the maximum value edge can do no better than report his true edges, and no job has incentive to falsely claim a maximum value edge. This gives the following theorem.

\begin{theorem}\label{thm:gap}
  There is a polynomial-time, $O(\log n)$ approximate, without-money mechanism for the generalized assignment problem that is truthful-in-expectation in the private graph model.
\end{theorem}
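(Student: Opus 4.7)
The plan follows the sketch in the paragraphs preceding the theorem: reduce \gap\ to \vigap\ with an $O(\log n)$ loss via randomized bucketing of edge values, apply Theorem \ref{thm:vigap}, and then handle the fact that the maximum edge value $v_{\max}$ is not known up-front using a second coin flip. I will first establish the reduction under the simplifying assumption that $v_{\max} := \max_{(i,j) \in E} v_{ij}$ is public, then wrap it with a gadget that elicits $v_{\max}$ truthfully.

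For the first part, sample $k$ uniformly from $\{0,1,\ldots,K\}$ where $K = \Theta(\log n)$, set $v = v_{\max}/2^k$, and build a \vigap\ instance whose sizes and capacities are unchanged and whose per-job value is $\hat{v}_i = v$, with edges restricted to those $(i,j)$ with $v_{ij} \geq v$. Run the $4$-approximate, truthful-in-expectation mechanism of Theorem \ref{thm:vigap} on this instance, and then independently cancel each output assignment $(i,j)$ with probability $1 - v/v_{ij}$. For approximation, the \gap\ OPT welfare restricted to edges with $v_{ij} < v_{\max}/n^2$ is at most $v_{\max}/n$, so the truncated OPT loses only a low-order term; for each surviving OPT edge $(i,j)$, with probability $\Omega(1/\log n)$ the sample $v$ lies in $(v_{ij}/2, v_{ij}]$, contributing $\hat{v}_{ij} \geq v_{ij}/2$ to the \vigap\ OPT. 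Combined with the $4$-approximation of Theorem \ref{thm:vigap} and the identity $\Ex[\text{true welfare}] = \Ex[\sum \hat{v}_{ij} x^{V}_{ij}]$ enforced by the cancellation, this gives an $O(\log n)$-approximation. For truthfulness, the cancellation step makes job $i$'s expected true utility from being assigned to machine $j$ in the final output equal $\hat{v}_{ij}$ times its VIGAP assignment probability, so for any report $E'_i$ job $i$'s expected utility equals its expected utility in the VIGAP subroutine run at random $v$; since each conditional VIGAP subroutine is truthful-in-expectation in the private graph model under the induced true edge set $E_i \cap \{(i,j) : v_{ij} \geq v\}$, truthfulness is inherited fiberwise in $v$ and hence in expectation.

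To remove the $v_{\max}$ assumption, after collecting reports flip a fair coin: on heads, assign the job $i^\ast$ incident to the globally maximum reported edge (breaking ties consistently) to that machine and leave all others unassigned; on tails, delete $i^\ast$ and all its edges and run the Step~1 mechanism on the residual instance with its own publicly computable $v_{\max}$. The approximation ratio remains $O(\log n)$ because the heads branch alone contributes expected welfare $v_{\max}/2$ while OPT is bounded by $v_{\max}$ plus the OPT of the residual instance, which tails approximates within $O(\log n)$. The main obstacle will be verifying truthfulness across the two branches; I plan to argue by cases on whether the true holder of the globally maximum edge coincides with the reported $i^\ast$, showing that (a) the true holder cannot gain by hiding edges since doing so forfeits the $v_{\max}/2$ heads payoff while only shrinking the effective $v_{\max}$ on tails, (b) a non-holder cannot gain by falsely claiming a larger edge since this yields zero true value on heads (the claimed edge is not in its true $E_i$) and exclusion on tails, and (c) any remaining perturbation leaves the identity of $i^\ast$ unchanged and reduces to Step~1 truthfulness conditional on $i^\ast$. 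Assembling these cases yields the claimed $O(\log n)$-approximate, truthful-in-expectation mechanism without money for \gap.
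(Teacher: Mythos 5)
Your Step~1 matches the paper's construction essentially verbatim (dyadic grid $v \in \{v_{\max}, v_{\max}/2,\ldots, v_{\max}/O(n^2)\}$, restriction to edges with $v_{ij}\geq v$, the \vigap\ subroutine of Theorem~\ref{thm:vigap}, and cancellation with probability $1-v/v_{ij}$ so that each job's expected true value coincides with its \vigap\ value, giving fiberwise truthfulness and an $O(\log n)$ loss), and that part is fine. The gap is in Step~2. The paper, on tails, discards the job $i^\ast$ holding the maximum reported edge but runs the Step~1 mechanism \emph{using that job's value $v_{\max}$ as the scale}; you instead recompute the scale as ``the residual instance's own $v_{\max}$.'' That residual maximum is a function of the \emph{remaining} jobs' reports, so for a job $i\neq i^\ast$ that holds the largest edge of the residual instance, the value grid $\{v^{res}_{\max}/2^k\}$ is no longer exogenous: by hiding its largest edge, $i$ leaves $i^\ast$ unchanged (so your cases (a) and (b) do not apply) yet shifts the entire bucketing grid. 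Your case (c), which asserts that any perturbation not changing $i^\ast$ ``reduces to Step~1 truthfulness conditional on $i^\ast$,'' is therefore false as stated: Step~1 truthfulness is proved only for a \emph{fixed, report-independent} $v_{\max}$, and once the grids under the true and deviating reports differ, the fiberwise comparison over levels $v$ no longer aligns, so nothing rules out a job benefiting from, say, moving its second-best edge into a more favorable bucket (full \vigap\ value $v$ close to $v_{ij}$, no cancellation loss, different contention pattern). ``Publicly computable from reports'' is not the same as ``public'': the whole point of the assumption removed in Step~2 is that the scale must not depend on private declarations of the jobs who are still being allocated.

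The fix is exactly the paper's choice: on tails, keep the discarded job's maximum edge value as the scale. That value upper-bounds every residual edge (so the truncation/approximation analysis goes through unchanged, with the heads branch already capturing $v_{\max}/2$), and, conditional on the identity of $i^\ast$ (which a non-holder can alter only by fabricating a larger edge, earning zero), it is independent of the remaining jobs' reports, so Step~1 truthfulness applies verbatim on the tails branch. With that single change your argument coincides with the paper's; as written, the truthfulness claim for your Step~2 variant is unproven and the case analysis offered to support it has a hole.
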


We leave open the question of whether there exists a constant factor truthful [in expectation] mechanism without money for the \gap\ in our model.

\section{Acknowledgements}
We thank Preston McAfee, Serge Plotkin, Ariel Procaccia, Tim Roughgarden, Michael Schwarz, and Mukund Sundararajan for helpful discussions and comments.

{
\bibliography{nomoney}
\bibliographystyle{plain}
}
\newpage
\appendix
\section{ The Construction of Lavi and Swamy }\label{app:laviswamy_recap}
First, we recall some basic concepts from combinatorial optimization, and state the key lemma in the construction of Lavi and Swamy.

\begin{definition}
  Let $P \sse \RR^d$ be a polytope. We define the \emph{Integer hull} of $P$, which we denote by $I(P)$, as the convex hull of all integer points in $P$. Equivalently, \[I(P) = hull(P \intersect \ZZ^d)\]
\end{definition}

\begin{definition}
  Let $P$ be a polytope. The \emph{integrality gap} of $P$ is defined as:
\[ \max_{c \in \RR^d} \frac{\max\{c^Tx : x \in P\}}{\max\{c^Tx : x \in I(P)\}} \]
\end{definition}

Note that the integrality gap of a polytope does not depend on any particular set of objectives $c$. In fact, even in problems where the only objectives of interest are of a certain class -- say submodular valuations as used in combinatorial auctions, or objectives $c$ in the nonnegative orthant -- the construction of Lavi and Swamy can only perform as well as the integrality gap of the polytope as a whole, as defined above.

\begin{definition}
  We say an algorithm \emph{shows an integrality gap of $\alpha$} for a polytope $P \sse \RR^d$ if it takes as input an arbitrary vector $c \in \RR^d$, and outputs a point $z \in P \intersect \ZZ^d$ with the guarantee that:
\[c z \geq \frac{1}{\alpha} \max\{cx : x \in P \}\]
\end{definition}

The construction of Lavi and Swamy concerns \emph{packing polytopes}. A polytope $P \sse \RR^d$ is a packing polytope if it is contained in the positive orthant -- i.e. $P \sse \RR+^d$ --, and moreover it is \emph{downwards closed}: if $y \in P$ and $x \in \RR+^d$ is such that $x \preceq y$ (component-wise) then $x \in P$.

Now, we come to the key lemma. We consider a packing polytope $P$, and use $P/\alpha$ to denote the scaled down copy of $P$ -- i.e. $P/\alpha = \set{ y /\alpha: y \in P}$. Using an algorithm showing the integrality gap of $\alpha$ for $P$, we can explicitly construct for every $x \in P/\alpha$ a distribution over integer points of $P$ that evaluates to $x$ in expectation.

\begin{lemma}[\cite{laviswamy}] \label{lem:laviswamy-containment} Let $P$ be a packing polytope of integrality gap at most $\alpha$. Then, for every $x \in P/\alpha$ there exists a distribution $D_x$ over $P \intersect \ZZ^d$ such that $E_{y \sim D_x} y = x$. Moreover, if there exists a polynomial-time algorithm $B$ that shows an integrality gap of $\alpha$ for $P$, then, for any $x \in P/\alpha$ we can compute $D_x$ in polynomial time.
\end{lemma}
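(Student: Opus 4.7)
The plan is to establish existence of the desired distribution by proving $x \in I(P)$, and then upgrade to a polynomial-time construction via the ellipsoid method with $B$ serving as an approximate separation oracle. Carath\'eodory's theorem then automatically gives a distribution supported on at most $d+1$ integer points whose mean is $x$.

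For the existence part, suppose for contradiction that $x \notin I(P)$. By the separating hyperplane theorem there exists $w \in \RR^d$ with $w^T x > \max_{y \in I(P)} w^T y$. I would first reduce to $w \geq 0$. Writing $w^+_i = \max(w_i, 0)$, the bound $w^T x \leq (w^+)^T x$ holds because $x \geq 0$. Since $P$ is downwards closed and nonempty, $0 \in P \intersect \ZZ^d \sse I(P)$, and moreover $I(P)$ inherits downwards-closedness from $P$. Hence any maximizer of $(w^+)^T y$ over $I(P)$ can be modified by zeroing out the coordinates where $w_i < 0$ without leaving $I(P)$, yielding a point $y'$ with $w^T y' = (w^+)^T y$; this gives $\max_{y \in I(P)} w^T y \geq \max_{y \in I(P)} (w^+)^T y$. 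So it suffices to derive a contradiction assuming $w \geq 0$. Now, $\alpha x \in P$ by definition of $P/\alpha$, so $\max_{y \in P} w^T y \geq w^T(\alpha x) = \alpha\, w^T x$; and by the integrality-gap hypothesis, $\max_{y \in P} w^T y \leq \alpha \max_{y \in I(P)} w^T y$. Chaining and dividing by $\alpha$ gives $w^T x \leq \max_{y \in I(P)} w^T y$, contradicting the separating inequality.

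For the polynomial-time construction, consider the LP with one variable $\lambda_y$ per integer point $y \in P \intersect \ZZ^d$ and the $d+1$ equality constraints $\sum_y \lambda_y = 1$ and $\sum_y \lambda_y\, y = x$, along with $\lambda \geq 0$. Its dual has $d+1$ variables $(w,t) \in \RR^d \cross \RR$ and exponentially many constraints of the form $w^T y \leq t$ for $y \in P \intersect \ZZ^d$. I would run the ellipsoid method on the dual using $B$ as an approximate separation oracle: on a candidate $(w,t)$, first replace $w$ by $w^+$ (justified by the downwards-closedness reduction above), then invoke $B$ to obtain an integer $z \in P$ with $w^{+T} z \geq \tfrac{1}{\alpha} \max_{y \in P} w^{+T} y$, and either report $z$ as a violated constraint or else certify via the existence argument that no separating inequality can exist. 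Ellipsoid terminates in polynomial time having encountered polynomially many integer points $S \sse P \intersect \ZZ^d$; solving the primal LP restricted to variables $\{\lambda_y\}_{y \in S}$ then produces $D_x$ in polynomial time.

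The main obstacle is the algorithmic half: aligning the $\alpha$-approximation of $B$ with the factor-$\alpha$ scaling built into $x \in P/\alpha$, so that the approximate separation oracle preserves feasibility of the restricted primal at the exact target $x$ rather than merely at a nearby point. The existence proof essentially supplies the correct calibration --- applying its chain of inequalities to the dual constraint reported by $B$ shows that any $(w,t)$ surviving the oracle is indeed feasible --- but carrying this through the ellipsoid invariant and the final restricted-primal solve requires care.
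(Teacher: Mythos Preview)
The paper does not prove this lemma at all: it is stated with the citation \cite{laviswamy} and immediately used, so there is no ``paper's own proof'' to compare against. What you have written is essentially a reconstruction of the Carr--Vempala/Lavi--Swamy argument that the citation points to.

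Your existence argument is sound. The one step that deserves a line of justification is the claim that $I(P)$ inherits downward-closedness from $P$: this is true, and the cleanest way to see it is that any facet-defining inequality $a^T y \leq \beta$ of $I(P)$ with $a_i < 0$ would be tight at some integer vertex $z$ with $z_i \geq 1$, whence $z - e_i \in P \cap \ZZ^d \subseteq I(P)$ yet $a^T(z - e_i) > \beta$, a contradiction; so all nontrivial facets have $a \geq 0$. With that in hand, your reduction to $w \geq 0$ and the chain $\alpha\, w^T x \leq \max_P w^T y \leq \alpha \max_{I(P)} w^T y$ goes through.

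Your algorithmic half is the right shape but is genuinely only a sketch, and you flag the real issue yourself. The point that needs to be made precise is this: when the ellipsoid method queries $(w,t)$ and $B$ returns $z$ with $(w^+)^T z < t$, you must argue that \emph{every} integer point $y \in P$ satisfies $w^T y \leq t$, not merely that $B$ failed to find a violator. This is exactly where the scaling by $\alpha$ is consumed: $(w^+)^T z \geq \tfrac{1}{\alpha} \max_{P} (w^+)^T y \geq \tfrac{1}{\alpha} \cdot \alpha\, (w^+)^T x = (w^+)^T x \geq w^T x$, and one must tie the threshold $t$ to $w^T x$ in the dual formulation to close the loop. Once you write the dual so that the objective is $t - w^T x$ and you are testing whether it can be driven below zero, the oracle's guarantee lines up cleanly and the restricted primal on the polynomially many witnesses $S$ is feasible at exactly $x$. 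As written, your description conflates ``$B$ found no violator'' with ``no violator exists,'' and the paragraph labeled ``main obstacle'' gestures at the fix without executing it.
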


Armed with the above lemma, Lavi and Swamy \emph{reduce} designing a truthful-in-expectation mechanism (with money) to designing a \emph{truthful fractional mechanism} (also with money), to be defined later.

We illustrate the technique of Lavi and Swamy by considering welfare maximization problems in a very general form. Let $P \sse \RR+^d$ be a packing polytope representing the set of feasible solutions. Moreover, assume that there are $n$ players $[n]$, and the \emph{valuation function} $v_i: \RR^d \to \RR$ of player $i$ is required to lie in some valid set of linear valuations $\V_i \sse \RR^{\RR^d}$. We denote $\V=\V_1 \cross \V_2 \ldots \cross \V_n$. The \emph{combinatorial welfare maximization problem} (henceforth CWMP) over $P$ and $\V$ is the problem of finding an integer point in $P$ maximizing the sum of values of the players. This gives the following LP relaxation, which we refer to as the \emph{fractional welfare maximization problem} (henceforth FWMP):

\begin{lp*}
\maxi{ \sum_i v_i(x)}
\st
\con{x \in P}
\end{lp*}

We define a \emph{truthful-in-expectation mechanism} for the combinatorial welfare maximization problem as a pair $M=(f,p)$ where $f:\V \to P \intersect \ZZ^d$ is a \emph{randomized allocation rule}, and $p: \V \to \RR^n$ is a \emph{randomized payment scheme}. As usual, we say $M$ is \emph{truthful in expectation } if the following holds for each $\vec{v} \in \V$ and $v_i \in \V_i$

\begin{align}
\Ex[v_i(f(v)) - (p(v))_i] \geq  \Ex[v_i(f(v_{-i},v'_i)) - (p(v,v'_i))_i].
\end{align}

Moreover, we define a \emph{fractional mechanism} as a pair $M=(f,p)$ where $f: \V \to P$ is a \emph{fractional allocation rule}, and $p: \V \to \RR^n$ is a \emph{payment scheme}. We say $M$ is a \emph{truthful fractional mechanism} if the following holds for each $\vec{v} \in \V$ and $v_i \in \V_i$

\begin{align}\label{eq:fractruth}
v_i(f(v)) - (p(v))_i \geq  v_i(f(v_{-i},v'_i)) - (p(v,v'_i))_i. 
\end{align}

It is easy to see that we can use the VCG mechanism to obtain a truthful fractional mechanism $M_{frac}$ for the fractional welfare maximization problem. Moreover, when optimizing objectives in $\V$ over $P$ can be done efficiently, $M_{frac}$ is a polynomial time mechanism. Lavi and Swamy observed that, given an efficient approximation algorithm that shows the integrality gap of $P$ in the sense defined above, one can obtain a polynomial-time, truthful-in-expectation mechanism $M_{exp}$ that looks simply like a scaled down version of $M_{frac}$. This follows from lemma \ref{lem:laviswamy-containment}. We state the main theorem of Lavi and Swamy and sketch its proof below.

\begin{theorem}[\cite{laviswamy}]\label{thm:laviswamy}
  Assume the fractional welfare maximization problem can be solved efficiently for any valuations in $\V$. Moreover, assume $P$ satisfies the conditions of Lemma \ref{lem:laviswamy-containment} with integrality gap $\alpha$ and algorithm $B$ showing the integrality gap. Then there exists an efficient truthful-in-expectation $\alpha$-approximate mechanism for the welfare maximization problem over $P$ and $\V$.
\end{theorem}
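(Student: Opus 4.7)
The plan is to construct the mechanism $M_{exp} = (f_{exp}, p_{exp})$ by composing the fractional VCG mechanism $M_{frac} = (f_{frac}, p_{frac})$ with the decomposition of Lemma \ref{lem:laviswamy-containment}, both rescaled by a factor of $1/\alpha$. Concretely, on input $v \in \V$: first run the efficient optimizer to obtain $f_{frac}(v) \in P$; second, form the scaled point $x(v) := f_{frac}(v)/\alpha \in P/\alpha$; third, invoke the constructive algorithm $B$ guaranteed by Lemma \ref{lem:laviswamy-containment} to obtain a distribution $D_{x(v)}$ over $P \cap \ZZ^d$ with $\Ex_{y \sim D_{x(v)}} y = x(v)$; fourth, sample $f_{exp}(v) \sim D_{x(v)}$ and charge scaled payments $p_{exp}(v) := p_{frac}(v)/\alpha$. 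Polynomial runtime follows since each step (optimization, VCG payments, and $B$) is efficient by hypothesis.

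The approximation guarantee is immediate: by linearity of $v_i$, and using that $\Ex[f_{exp}(v)] = x(v) = f_{frac}(v)/\alpha$,
\[
\Ex\!\left[\sum_i v_i(f_{exp}(v))\right] = \sum_i v_i\!\left(\tfrac{1}{\alpha} f_{frac}(v)\right) = \tfrac{1}{\alpha} \sum_i v_i(f_{frac}(v)) \geq \tfrac{1}{\alpha}\, \mathrm{OPT}_{frac} \geq \tfrac{1}{\alpha}\, \mathrm{OPT}_{int},
\]
where the first inequality uses optimality of $f_{frac}$ over $P$ and the second uses $I(P) \subseteq P$.

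For truthfulness in expectation, the key point is that valuations in $\V$ are linear, so expectation commutes with $v_i$. For any $v \in \V$ and any deviation $v'_i \in \V_i$,
\[
\Ex\!\left[v_i(f_{exp}(v)) - (p_{exp}(v))_i\right] = \tfrac{1}{\alpha}\!\left[v_i(f_{frac}(v)) - (p_{frac}(v))_i\right],
\]
and analogously for the deviation profile $(v_{-i}, v'_i)$. Dividing the fractional truthfulness inequality \eqref{eq:fractruth} of $M_{frac}$ by the positive constant $\alpha$ preserves its direction and yields exactly the required inequality for $M_{exp}$.

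The main technical obstacle is simply justifying the interchange $\Ex[v_i(f_{exp}(v))] = v_i(\Ex[f_{exp}(v)])$, which is what forces $\V$ to consist of linear (or at least affine) valuations as stipulated; any hidden nonlinearity would break the reduction. A secondary concern is verifying that $f_{frac}(v)/\alpha$ indeed lies in $P/\alpha$ so that Lemma \ref{lem:laviswamy-containment} applies; this is automatic since $f_{frac}(v) \in P$. Once those points are noted the result follows by assembling the pieces above.
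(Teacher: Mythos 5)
Your construction is exactly the paper's: sample $f_{exp}(v)$ from the distribution $D_{f_{frac}(v)/\alpha}$ of Lemma \ref{lem:laviswamy-containment}, charge $p_{frac}(v)/\alpha$, and use linearity of the valuations plus linearity of expectation to show each player's expected utility under $M_{exp}$ is $\tfrac{1}{\alpha}$ times their utility under $M_{frac}$, so inequality \eqref{eq:fractruth} transfers. The argument is correct and matches the paper's proof sketch, with your explicit approximation-ratio calculation merely filling in a step the paper leaves implicit.
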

\begin{proofsketch}
  Consider the efficient, truthful fractional mechanism $M_{frac}= (f_{frac},p_{frac})$ constructed using VCG. We will define $M_{exp} = (f_{exp},p_{exp})$ as follows. We let $p_{exp}(v) = \frac{p(v)}{\alpha}$, and let $f_{exp}(v)$ be drawn from the distribution $D_{f_{frac}(v)/\alpha}$, as defined in Lemma \ref{lem:laviswamy-containment}. The random function $f_{exp}$ can be evaluated efficiently using $B$, as in the lemma. Now, using linearity of expectations, the linearity of the valuation functions, and Lemma \ref{lem:laviswamy-containment}, it is easy to show truthfulness of $M_{exp}$ by showing that the expected valuation less the payment of a player matches that of $M_{frac}$ up to a scaling factor of $\alpha$.
\begin{align*}
  \Ex[v_i(f_{exp}(v')) - (p_{exp}(v'))_i] &= v_i(\Ex[f_{exp}(v')]) -
  (p_{exp}(v'))_i\\
  &= v_i(\frac{1}{\alpha} \cdot f_{frac}(v')) - \frac{1}{\alpha} \cdot
  (p_{frac}(v'))_i \\
  &= \frac{1}{\alpha} \cdot [v_i(f_{frac}(v')) - (p_{frac}(v'))_i]
\end{align*}
Applying this equivalence to both sides of inequality \eqref{eq:fractruth} yields truthfulness of $M_{exp}$.
\end{proofsketch}
It is worth noting that a simple modification to $p$ allows us to guarantee individual rationality, while preserving truthfulness-in-expectation. However, this will not be relevant for our purposes, as all our mechanisms will utilize no payments.

\section{Network Flow Preliminaries}
\label{app:flow}
Here we recall some notions from network flow theory, and define simple
concepts that we will need in analysis of our
algorithms for MKP and SIGAP. 

We recall that a \emph{weighted-capacitated directed
  graph} is a directed graph $G=(V,E)$, with a weight $w_e \in \RRp$
and capacity $c_e \in \RRp$ on an edge $e$. We define a flow simply as
follows.

\begin{definition}
  A flow is a vector $f \in \RR^{E(G)}$.
\end{definition}

We distinguish circulations on $G$ as follows.

\begin{definition}
A flow $f \in \RR^{E(G)}$ is a \emph{circulation} if it
conserves flow on each vertex. In particular, for each $v \in V$ with
incoming edges $\Gamma^-(v)$ and outgoing edges $\Gamma^+(v)$, we have:

\[ \sum_{e \in \Gamma^-(v)} f(e)  = \sum_{e \in \Gamma^+(v)} f(e) \]
\end{definition}

We define the \emph{weight} of circulation $f$ as follows: $w(f) = \sum_{e
  \in E(G)} w_e f(e)$. Note that we allow a circulation to send
negative flow on an edge, as well as overflow the capacity of an
edge. A \emph{feasible} circulation is better behaved.
\begin{definition}
  A vector $f \in \RR^{E(G)}$ is a \emph{feasible circulation} if it is a
  circulation, and moreover it satisfies nonnegativity and
  capacity constraints. In particular, for each edge $e$ we have: \[ 0
  \leq f(e) \leq c_e\]
\end{definition}

We distinguish the simplest circulations, or \emph{flow cycles},
and recall that every circulation can be decomposed into cycles
that are oriented consistently.

\begin{definition}[Flow Cycle]
A circulation $C$ on $G$ is a \emph{flow cycle} if there exists
a simple undirected cycle $H$ such that $C$ sends non-zero flow only on edges
of $H$.
\end{definition}

\begin{definition}[Conformal Decomposition]
  Fix a circulation $f$. We say a set of circulations
  $\set{g^1,\ldots,g^k}$ is a \emph{conformal decomposition} of $f$ if the
  following hold
  \begin{enumerate}
  \item (Decomposition) $f = \sum_{i=1}^k g^i$ 
  \item (Conformal) For each $e \in E(G)$, $f(e) \cdot g^i(e) \geq 0$.
  \end{enumerate}

\end{definition}
Note that every $g^i$ in the conformal decomposition must send flow in
the same direction as $f$ on each edge.

\begin{theorem} [\cite{bertsekas}]
  \label{conformal-decomposition-cycles}
  Every circulation $f$ can be conformally decomposed
  into flow cycles.
\end{theorem}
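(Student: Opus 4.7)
The plan is to prove this by induction on the size of the support $\mathrm{supp}(f) = \{e \in E(G) : f(e) \neq 0\}$. The base case $f = 0$ is trivial (the empty decomposition works). For the inductive step, I will find a single flow cycle $g$ that is conformal with $f$ and whose support contains at least one edge on which $|g|$ attains $|f|$, so that $f - g$ is still a circulation but strictly reduces the support; then I apply induction to $f - g$.

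To find the cycle, I will build an auxiliary directed graph $\widehat G$ on the same vertex set: for each $e \in \mathrm{supp}(f)$, include an edge oriented in the original direction if $f(e) > 0$ and reversed if $f(e) < 0$. The circulation condition on $f$ translates in $\widehat G$ into the fact that at every vertex the number of $\widehat G$-in-edges (weighted by $|f|$) equals the number of $\widehat G$-out-edges (weighted by $|f|$); in particular any vertex incident to an edge of $\mathrm{supp}(f)$ has both an outgoing and an incoming $\widehat G$-edge. Starting from any such vertex and walking forward along $\widehat G$-edges I must eventually revisit a vertex, which exposes a directed cycle $\widehat H$ in $\widehat G$. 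Let $H$ be the corresponding simple undirected cycle in $G$; by construction $f$ flows ``consistently around $H$'' in the sense that each $e \in H$ has a well-defined sign $\sigma(e) \in \{\pm 1\}$ matching $\mathrm{sgn}(f(e))$.

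Now set $\varepsilon = \min_{e \in H} |f(e)| > 0$ and define $g(e) = \varepsilon \, \sigma(e)$ for $e \in H$ and $g(e) = 0$ otherwise. Then $g$ is a flow cycle (it is a circulation supported on the simple undirected cycle $H$), and $f(e) g(e) = \varepsilon |f(e)| \geq 0$ on $H$ and $0$ elsewhere, so $g$ is conformal with $f$. Let $f' = f - g$. Because $g$ is a circulation, so is $f'$; because $|g(e)| \leq |f(e)|$ with matching sign on $H$, the edge $e^* \in H$ attaining the minimum satisfies $f'(e^*) = 0$, while every other edge $e \in H$ has $f'(e)$ of the same sign as $f(e)$ (or zero). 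Hence $\mathrm{supp}(f') \subsetneq \mathrm{supp}(f)$, and moreover $\mathrm{sgn}(f'(e))$ agrees with $\mathrm{sgn}(f(e))$ whenever $f'(e) \neq 0$.

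By induction $f'$ admits a conformal decomposition $\{g^1, \dots, g^{k-1}\}$ into flow cycles; I claim $\{g, g^1, \dots, g^{k-1}\}$ is a conformal decomposition of $f$. Summation is immediate: $g + \sum_i g^i = g + f' = f$. Conformality with $g$ was checked above. For each $g^i$, the inductive hypothesis gives $f'(e) g^i(e) \geq 0$ for all $e$; combined with the observation that $\mathrm{sgn}(f'(e)) = \mathrm{sgn}(f(e))$ wherever $f'(e) \neq 0$, this yields $f(e) g^i(e) \geq 0$ for every $e$ (the only subtle case, $f'(e) = 0 \neq f(e)$, forces $g^i(e) = 0$ by the inductive conformality). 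The one step I expect to be the main obstacle is the sign bookkeeping: carefully verifying that passing from $f$ to $f'$ preserves signs edge by edge so that conformality with $f$ (not just with $f'$) transfers through the induction; the construction of $g$ via the bottleneck $\varepsilon$ is designed precisely to make this work.
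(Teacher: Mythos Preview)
The paper does not prove this theorem; it is stated with a citation to \cite{bertsekas} and used as a black box. Your argument is the standard inductive proof and is essentially correct.

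One small point of bookkeeping: in the final paragraph you justify the case $f'(e)=0\neq f(e)$ by saying that ``inductive conformality'' forces $g^i(e)=0$. Strictly speaking, conformality of $g^i$ with $f'$ only asserts $f'(e)\,g^i(e)\geq 0$, which is vacuous when $f'(e)=0$. What actually makes this step work is that your construction always extracts cycles supported on the current support, so inductively each $g^i$ is supported on $\mathrm{supp}(f')\subseteq\mathrm{supp}(f)$. To make the induction go through cleanly, strengthen the inductive hypothesis to: \emph{every circulation $f$ admits a conformal decomposition into flow cycles, each supported on $\mathrm{supp}(f)$}. Your construction of $g$ already satisfies this, and with the strengthened hypothesis the ``subtle case'' is immediate. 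This is a one-line fix; the rest of the argument is fine.
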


Conformal decomposition immediately yields a useful property of feasible
circulations.

\begin{lemma}\label{lem:feasibility_maintained}
  Let $f$ and $f'$ be feasible circulations on $G$. Let
  $C_1,\ldots,C_k$ be a conformal decomposition into cycles of
  $f'-f$. Then, for every $L \sse \set{1,\ldots,k}$ the circulation $f +
  \sum_{i \in L} C_i$ is feasible. Equivalently, for every $L \sse
  \set{1,\ldots,k}$ the circulation $f' - \sum_{i \in L} C_i$ is feasible.
\end{lemma}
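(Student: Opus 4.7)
Plan: Fix an arbitrary subset $L \sse \set{1,\ldots,k}$ and let $g = f + \sum_{i \in L} C_i$. Since sums of circulations are circulations, $g$ is a circulation; the content of the lemma is that $g$ satisfies the capacity and nonnegativity constraints on every edge. Accordingly I would reduce to showing, edge by edge, that $0 \leq g(e) \leq c_e$. The strategy is to sandwich $g(e)$ between $f(e)$ and $f'(e)$, both of which already lie in $[0,c_e]$ by feasibility of $f$ and $f'$.

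The key observation is that conformality of $\{C_1,\ldots,C_k\}$ with respect to $\Delta := f' - f$ forces every $C_i(e)$ to agree in sign with $\Delta(e)$ (and to vanish whenever $\Delta(e) = 0$, since flow-cycle decompositions use only edges with non-zero net flow). Thus I would split into two cases on a fixed edge $e$:

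\begin{itemize}
\item If $\Delta(e) \geq 0$, then $C_i(e) \geq 0$ for every $i$, so the partial sum $\sum_{i \in L} C_i(e)$ lies between $0$ and $\sum_{i=1}^k C_i(e) = \Delta(e)$. Adding $f(e)$ throughout yields $f(e) \leq g(e) \leq f(e) + \Delta(e) = f'(e)$, and both endpoints lie in $[0,c_e]$.
\item If $\Delta(e) \leq 0$, then $C_i(e) \leq 0$ for every $i$, and the symmetric argument gives $f'(e) \leq g(e) \leq f(e)$, again inside $[0,c_e]$.
\end{itemize}

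In either case $g(e) \in [0,c_e]$, so $g$ is a feasible circulation. The ``equivalently'' clause is immediate once this is established: since $\sum_{i=1}^k C_i = \Delta = f' - f$, we have $f' - \sum_{i \in L} C_i = f + \sum_{i \notin L} C_i$, which is of the form already proven feasible by applying the first part to the complementary subset $\{1,\ldots,k\} \sm L$.

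The only subtle point — and in my view the one worth stating explicitly — is the monotonicity of the partial sums, i.e.\ that each term $C_i(e)$ pushes $f(e)$ in the same direction (toward $f'(e)$). This is precisely the role of conformality; without it, partial sums could overshoot and violate either the lower or upper capacity bound. Since conformality and the $k$-term identity $\sum_i C_i = \Delta$ together give both the direction and the magnitude bound, no further work is needed beyond the two-line case analysis above.
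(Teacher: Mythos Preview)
Your argument is correct and is exactly the natural one: conformality pins the sign of each $C_i(e)$ to that of $\Delta(e)=f'(e)-f(e)$, so any partial sum $\sum_{i\in L} C_i(e)$ lies between $0$ and $\Delta(e)$, which sandwiches $g(e)$ between $f(e)$ and $f'(e)$ and hence inside $[0,c_e]$; the ``equivalently'' clause then drops out by complementing $L$.

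There is nothing to compare against, since the paper states this lemma without proof in the network-flow preliminaries, treating it as an immediate consequence of conformality. Your write-up is precisely the verification one would expect a reader to supply. The one parenthetical you flag --- that $C_i(e)=0$ whenever $\Delta(e)=0$ --- is indeed not forced by the paper's bare \emph{definition} of conformal decomposition (the condition $\Delta(e)\cdot C_i(e)\geq 0$ is vacuous at $\Delta(e)=0$), but it \emph{is} a property of the decomposition produced by Theorem~\ref{conformal-decomposition-cycles}, which is the decomposition actually invoked wherever the lemma is used; so your remark is appropriate and the argument goes through.
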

In other words, the cycles of the conformal decomposition of $f'-f$
may be added to $f$ (or subtracted from $f'$) in any order, maintaining feasibility along the way.






\end{document}